\newif\ifeprint
\newcommand{\AS}[1]{\textcolor{red}{[{\bf Andr\'{e} S:} #1]}}
\newcommand{\XB}[1]{\textcolor{blue}{[{\bf Xavier B:} #1]}}
\newcommand{\YS}[1]{\textcolor{green}{[{\bf Yixin S:} #1]}}
\newcommand{\AC}[1]{\textcolor{orange}{[{\bf André C:} #1]}}
\renewcommand{\AS}[1]{}
\renewcommand{\XB}[1]{}
\renewcommand{\YS}[1]{}
\renewcommand{\AC}[1]{}
\pgfplotsset{compat=1.10}
\colorlet{darkgreen}{green!50!black}
\colorlet{darkblue}{blue!50!black}
\colorlet{darkorange}{orange!90!black}
\colorlet{darkred}{red!50!black}
\renewcommand{\vec}[1]{\ensuremath{\mathbf{#1}}}
\newcommand{\bigO}[1]{ \mathcal{O} \left(#1 \right)}
\newcommand{\bigOt}[1]{ \widetilde{\mathcal{O}} \left(#1 \right)}
\newcommand{\bigOg}[1]{\Omega \left(#1 \right)}
\newcommand{\zo}{\{0,1\}}
\newcommand{\bra}[1]{\left\langle #1 \right|}
\newcommand{\ket}[1]{\left| #1 \right\rangle}
\newcommand{\braket}[2]{\left\langle  #1  | #2 \right\rangle}
\newcommand{\ceil}[1]{\left \lceil #1 \right\rceil}
\newcommand{\floor}[1]{\left \lfloor #1 \right\rfloor}
\DeclareMathOperator{\mR}{\mathbb{R}}
\DeclareMathOperator{\mZ}{\mathbb{Z}}
\DeclareMathOperator{\cL}{\mathcal{L}}
\newcommand{\poly}{\operatorname{poly}}
\newcommand{\prob}[2][]{\Pr
\ifthenelse{\isempty{#1}}{}{_{#1}}
\ifthenelse{\isempty{#2}}{}{\left[ #2 \right]}
}
\DeclarePairedDelimiter{\set}{\{}{\}}
\newcommand{\eps}{\varepsilon}
\newcommand{\hh}{\mathcal{H}}
\newcommand{\Unif}{\mathrm{Unif}}
\newcommand{\xh}{\widehat{x}}
\newcommand{\yh}{\widehat{y}}
\newcommand{\pyh}{\widehat{p_y}}
\newcommand{\SWUP}{\textrm{SWUP}}
\newcommand{\Swap}{\textrm{Swap}}
\newcommand{\UP}{\textrm{UP}}
\newcommand{\Enc}{\textrm{Enc}}
\crefname{problem}{problem}{problems}
\Crefname{problem}{Problem}{Problems}
\crefname{proposition}{proposition}{propositions}
\Crefname{proposition}{Proposition}{Propositions}
\renewcommand{\autoref}[1]{\Cref{#1}}
\def\epsilon{\varepsilon}
\def\setkey#1#2#3{\protected@edef#1##1{\protect\IfEq{#2}{##1}{#3}{#1{##1}}}}\makeatother
\def\newdict#1#2{\def#1##1{#2}}
\title{Finding many Collisions via Reusable Quantum Walks}
\author{Xavier Bonnetain\inst{1} \and André Chailloux\inst{2} \and André Schrottenloher\inst{3} \and Yixin Shen\inst{4}}
\institute{Université de Lorraine, CNRS, Inria, Nancy, France \and
Inria, Paris, France \and
Cryptology Group, CWI, Amsterdam, The Netherlands \and
Royal Holloway, University of London}
\author{}
\institute{}
\begin{document}
\maketitle
\renewcommand{\labelitemi}{$\bullet$}

\begin{abstract}
Given a random function $f$ with domain $[2^n]$ and codomain $[2^m]$, with $m \geq n$, a collision of $f$ is a pair of distinct inputs with the same image. Collision finding is an ubiquitous problem in cryptanalysis, and it has been well studied using both classical and quantum algorithms. Indeed, the quantum query complexity of the problem is well known to be $\Theta(2^{m/3})$, and matching algorithms are known for any value of $m$.

The situation becomes different when one is looking for \emph{multiple} collision pairs. Here, for $2^k$ collisions, a query lower bound of $\Theta(2^{(2k+m)/3})$ was shown by Liu and Zhandry (EUROCRYPT~2019). A matching algorithm is known, but only for relatively small values of $m$, when many collisions exist. In this paper, we improve the algorithms for this problem and, in particular, extend the range of admissible parameters where the lower bound is met.

Our new method relies on a \emph{chained quantum walk} algorithm, which might be of independent interest. It allows to extract multiple solutions of an MNRS-style quantum walk, without having to recompute it entirely: after finding and outputting a solution, the current state is reused as the initial state of another walk.

As an application, we improve the quantum sieving algorithms for the shortest vector problem (SVP), with a complexity of $2^{0.2563d + o(d)}$ instead of the previous $2^{0.2570d + o(d)}$.
\end{abstract}

\keywords{Quantum algorithms, quantum walks, collision search, element distinctness, lattice sieving}


\section{Introduction}
\label{sec:intro}

Quantum walks are a powerful algorithmic tool which has been used to provide state-of-the-art algorithms for various important problems in post-quantum cryptography, such as the shortest vector problem (SVP) via lattice sieving~\cite{CL21}, the subset sum problem~\cite{DBLP:conf/asiacrypt/BonnetainBSS20}, information set decoding~\cite{DBLP:conf/pqcrypto/KachigarT17}, etc.

These applications are all established under a particular quantum walk framework called the MNRS framework~\cite{DBLP:journals/siamcomp/MagniezNRS11}, and the quantum walks look for marked nodes in a so-called Johnson graph~\cite{DBLP:conf/pqcrypto/KachigarT17} (or a product of Johnson graphs). When walking on this particular graph, the MNRS framework is somewhat rigid. First, it requires to setup the uniform superposition of all nodes along with their attached data structure, then it applies multiple times reflection operators which move this quantum state close to the uniform superposition of all marked nodes.


Due to this rigidity, previously, the best way to find $k$ different marked nodes was to run the whole quantum walk (including the setup) $k$ times. In~\cite{CL21} the authors noticed that with a better algorithm for finding \emph{multiple} close lattice vectors, instead of one with a single walk,  we would improve the total quantum time complexity of their algorithm for solving the SVP.

A natural observation which guides us throughout this paper is that in certain cases, after obtaining the uniform superposition of all marked nodes via the MNRS quantum walk, it is possible to retrieve part of the solution from the data structure and start another MNRS quantum walk using the remaining part of the quantum state as the new starting state. By doing so, we avoid repeating the setup cost for each new quantum walk, and we now benefit from a trade-off.

In particular, using this observation, we tackle the following problem:

\begin{problem}[Multiple collision search]
Let $f : \zo^n \to \zo^m$, $n \leq m \leq 2n$ be a random function. Let $k \leq 2n-m$. Find $2^k$ \emph {collision pairs}, that is, pairs of distinct $x,y$ such that $f(x) = f(y)$.
\end{problem}

The constraints on the input and output domain are such that a significant ($\Theta\left(2^{2n-m}\right)$) number of collisions pairs exist in the random case. This problem has several applications both in asymmetric and symmetric cryptography. For example, the subproblem in lattice sieving of finding multiple close vectors to a target vector mentioned before can be seen as a special case. The limited-birthday problem, which appears in symmetric cryptanalysis (e.g., impossible differential attacks~\cite{DBLP:conf/asiacrypt/BouraNS14} and rebound distinguishers~\cite{DBLP:journals/tosc/HosoyamadaNS20}) is another example.

\paragraph{Lower Bounds.}
While quantum query lower bounds for the collision problem (with a single solution) had been known for a longer time, Liu and Zhandry proved more recently in~\cite{DBLP:conf/eurocrypt/LiuZ19} a query lower bound in $\bigOg{2^{2k/3 + m/3}}$ to find $2^k$ solutions, which holds for all values of $m \geq n$.

For relatively small values of $k$ and $m$ (actually, $k \leq 3n -2m$, as we explain in~\autoref{subsection:general-case}), the BHT collision search algorithm~\cite{DBLP:conf/latin/BrassardHT98}, allows to reach this bound. Besides this algorithm, Ambainis' algorithm~\cite{DBLP:journals/siamcomp/Ambainis07} uses a quantum walk to find one collision in time $\bigOt{2^{m/3}}$. However, no matching algorithm was known for other values, and in particular, for finding more than 1 collision with $m$ bigger than $1.5n$.


\subsubsection{Contributions.}
Our main contribution in this paper is a \emph{chained quantum walk} algorithm to solve the multiple collision search problem. We formalize the intuitive idea that the output state of a quantum walk can be \emph{reused}, to some extent, as the starting state of another. Our algorithm runs in $\bigOt{2^{\frac23k+\frac{m}3}}$ time and space, in the qRAM model, for any admissible values of $k, n, m$ such that $k \leq \frac{m}{4}$. By combining it with the BHT approach, we can now meet the lower bound over all values of $k, n, m$, except a range of $(k,m)$ contained in $\left[\frac{n}{3}, n\right] \times \left[n , 1.6 n\right]$. Nevertheless, our approach also improves the known complexities in this range.

\begin{quotation}
\textbf{\Cref{thm:main-result}} (\Cref{sec:many_collisions}). Let $f : \zo^n \to \zo^m$, $n \leq m \leq 2n$ be a random function. Let $k \leq \min(2n-m, m/4)$. There exists a quantum algorithm making $\bigO{2^{2k/3 + m/3}}$ quantum queries to $f$ and running in time $\bigOt{ 2^{2k/3 + m/3} }$, that outputs $2^k$ collision pairs of $f$.
\end{quotation}

Using this result, we improve the state-of-the-art time complexity of quantum sieving to solve the SVP in \cite{CL21} from $2^{0.2570d + o(d)}$ to $2^{0.2563d + o(d)}$.
We also provide time-memory tradeoffs that are conjectured to be tight~\cite{DBLP:conf/tqc/HamoudiM21}:

\begin{quotation}
\textbf{\Cref{thm:tradeoff}} (\Cref{sec:many_collisions}). Let $f : \zo^n \to \zo^m$, $n \leq m \leq 2n$ be a random function. For all $k \leq \ell \leq \max(2n-m, m/2)$, there exists an algorithm that computes $2^k$ collisions using $\bigOt{2^\ell}$ qubits and $\bigOt{2^{k+m/2-\ell/2}}$ quantum gates and quantum queries to $f$.
\end{quotation}


\subsubsection{Organization.}
In \Cref{sec:prelim} we provide several technical preliminaries on quantum algorithms, especially Grover's algorithm. Indeed, an MNRS quantum walk actually emulates a quantum search, and these results are helpful in analyzing the behavior of such a walk. In \Cref{sec:prelim-walk}, we give important details on the MNRS framework, and in particular, the \emph{vertex-coin encoding}, which is a subtlety often omitted from depictions of the framework in the previous literature. In \Cref{sec:many_collisions} we detail our algorithm assuming a suitable quantum data structure is given, and in \Cref{sec:data} we detail the \emph{quantum radix trees}. While there were already proposed in~\cite{DBLP:phd/basesearch/Jeffery14}, we give new (or previously omitted) details relative to the radix tree operations, memory allocation, and how we can efficiently and robustly extract collisions from it. We give a general summary of the multiple collision search problem in \Cref{subsection:general-case} and our applications in \Cref{section:applications}. 




\section{Preliminaries}\label{sec:prelim}

In this section, we give some preliminaries on collision search, quantum algorithms and Grover search, which are important for the analysis of quantum walks and their data structures. 

\subsection{Collision Search}

In this paper, we study the problem of \emph{collision search} in random functions.

\begin{problem}\label{pb:collision}
Let $f~: \zo^n \to \zo^m$ ($n \leq m$) be a random function. Find a collision of $f$, that is, a pair $(x,y), x \neq y$ such that $f(x) = f(y)$.
\end{problem}

The case $m < n$ can be solved by the same algorithms as the case $m = n$ by reducing $f$ to a subset of its domain. This is why in the following, we focus only on $m \geq n$. The average number of collisions is $\bigO{2^{2n-m}}$. When $m \geq 2n$, we can assume that exactly one collision exists, or none. Distinguishing between these two cases is the problem of \emph{element distinctness}, which is solved by searching for the collision. Regardless of $m$, the collision problem can be solved in:
\begin{itemize}
\item $\Theta(2^{m/2})$ classical time (and queries to $f$). When $m = n$, the problem is the easiest, as it requires only $\bigO{2^{n/2}}$ time and $\mathsf{poly}(n)$ memory using Pollard's rho method. When $m = 2n$, the problem is harder since the best algorithm also uses $\Theta(2^n)$ memory.
\item $\Theta(2^{m/3})$ quantum time (and quantum queries to $f$). A first algorithm was given by Brassard, H{\o}yer and Tapp to reach this for $m = n$~\cite{DBLP:conf/latin/BrassardHT98}, then the lower bound was proven to be $\Omega(2^{m/3})$~\cite{DBLP:journals/jacm/AaronsonS04}, and afterwards, Ambainis solved the \emph{element distinctness} problem (the case $m = 2n$) by a quantum walk algorithm~\cite{DBLP:journals/siamcomp/Ambainis07} which can be adapted for any value of $m$.
\end{itemize}

In our case, we want to solve the problem of \emph{multiple collision search}: as there will be expectedly many collisions in the outputs of $f$, we want to find a significant (exponential in $n$) number of them.

\begin{problem}
  Let $f : \zo^n \to \zo^m$, $n \leq m \leq 2n$, $k \leq 2n-m$. Find $2^k$ collisions of $f$.
\end{problem}

Here the state of the art differ classically and quantumly:
\begin{itemize}
\item Classically, it is well known that the problem can be solved for any $m$ and $k$ in $\Theta(2^{(k+m)/2})$ queries (as long as $2^k$ does not exceed the average number of collisions of $f$).
\item Quantumly, Liu and Zhandry~\cite{DBLP:conf/eurocrypt/LiuZ19} gave a query lower bound $\Omega(2^{2k/3 + m/3})$. However, a matching algorithm is only known for small $m$. For example, this lower bound is matched for $m = n$ by adapting the BHT algorithm~\cite{DBLP:conf/eurocrypt/LiuZ19,DBLP:conf/pqcrypto/HosoyamadaSTX19}.
\end{itemize}

\paragraph{On the Memory Complexity.}
For $m = n$, the best known classical algorithm for multiple collision-finding is the parallel collision search (PCS) algorithm by van Oorschot and Wiener~\cite{DBLP:journals/joc/OorschotW99}. It generalizes Pollard's rho method which finds a single collision in $\bigO{2^{n/2}}$ time and $\mathsf{poly}(n)$ memory. Dinur~\cite{DBLP:conf/eurocrypt/Dinur20} showed that in this regime, the time-space trade-off of the PCS algorithm is optimal. Using a restricted model of computation, it can also be shown optimal for larger values of $m$.

Quantumly, a time-space lower bound of $T^3S \geq \bigOg{2^{3k+m}}$ has been shown~\cite{DBLP:conf/tqc/HamoudiM21}. However, the authors conjecture this bound can be improved to $T^2S \geq \bigOg{2^{2k+m}}$. All known quantum algorithms for collisions, including our new algorithms, match this conjectured lower bound.


\subsection{Quantum Algorithms}
We refer to~\cite{nielsen2002quantum} for an introduction to quantum computation. We write our quantum algorithms in the standard \emph{quantum circuit model}. By default, we use the (universal) Clifford+T gate set, although our complexity analysis remains asymptotic, and we do not detail our algorithms at the gate level.

\paragraph{Memory Models.}
Many memory-intensive quantum algorithms require some kind of \emph{quantum random-access model} (qRAM), which can be stronger than the standard quantum circuit model. One can encounter two types of qRAM:
\begin{itemize}
\item Classical memory with quantum random access (QRACM): a classical memory of size $M$ can be addressed \emph{in quantum superposition} in $\mathsf{polylog}(M)$ operations.
\item Quantum memory with quantum random access (QRAQM): $M$ \emph{qubits} can be addressed \emph{in quantum superposition} in $\mathsf{polylog}(M)$ operations.
\end{itemize}

The QRAQM model is required by most quantum walk based algorithms for cryptographic problems, e.g., subset-sum~\cite{DBLP:conf/pqcrypto/BernsteinJLM13,DBLP:conf/asiacrypt/BonnetainBSS20}, information set decoding~\cite{DBLP:conf/pqcrypto/KachigarT17} and the most recent quantum algorithm for lattice sieving~\cite{CL21}. Previous papers considered that it required only to augment the set of gates available in a quantum circuit with the following ``qRAM read'' gate, which accesses in superposition an array of $M$ memory cells (e.g., individual bits):
\begin{equation}
\ket{y_1, \ldots, y_M} \ket{x} \ket{i} \xmapsto{\mathsf{qRAMR}} \ket{y_1, \ldots, y_M} \ket{x \oplus y_i} \ket{i} \enspace.
\end{equation}

Thus, the term ``QRAQM model'' was deemed equivalent to  ``counting the complexity in Clifford+T+\textsf{qRAM} gates'' instead of just Clifford+T gates. Note that the \textsf{qRAMR} gate is indeed a unitary, and it can be simulated with Clifford+T gates, at the expense of a complexity linear in the number of addressed qubits.

\paragraph{qRAM write.}
The \textsf{qRAMR} gate, as its name indicates, allows only to \emph{read} in superposition. But in most previous algorithms that required the QRAQM model, including the quantum walk algorithms that we are interested in, and those that we introduce in this paper, we actually need a second qRAM gate that we name the ``qRAM write''\footnote{The \textsf{qRAMW} is actually the \textsf{qRAMR} gate composed with a Hadamard transform.}:
\begin{equation}
\ket{y_1, \ldots, y_M} \ket{x} \ket{i} \xmapsto{\mathsf{qRAMW}} \ket{y_1, \ldots, y_i \oplus x, \ldots y_M} \ket{x} \ket{i} \enspace .
\end{equation}
This operation is required to efficiently maintain quantum data structures such as the \emph{quantum radix trees} from the literature (see~\autoref{sec:data}). Indeed, when working on these data structures, it is required to update in polynomial time the data at a position which can be in superposition (e.g., adding a new node to the radix tree). 
In the following, we count the complexity of our algorithms asymptotically on the ``Clifford + T + \textsf{qRAMR} + \textsf{qRAMW}'' gate set, so we assume that both the \textsf{qRAMR} and \textsf{qRAMW} have unit cost, as would be required by previous works.

\paragraph{Collision Finding without qRAM.}
To date, the best quantum algorithms for collision finding, and the ones that reach the query lower bound, require the qRAM model: the BHT algorithm~\cite{DBLP:conf/latin/BrassardHT98} uses QRACM and Ambainis' quantum walk uses QRAQM~\cite{DBLP:journals/siamcomp/Ambainis07} to define time-efficient quantum data structures. Initially Ambainis used a \emph{skip list}. We will focus on the more recent \emph{quantum radix tree}, but the QRAQM requirement remains the same.

To some extent, it is possible to get rid of qRAM. For $m = n$, the complexity rises from $\bigO{2^{m/3}}$ to $\bigO{2^{2m/5}}$~\cite{DBLP:conf/asiacrypt/ChaillouxNS17}. For $m = 2n$, the complexity rises to $\bigO{2^{3m/7}}$~\cite{DBLP:conf/sacrypt/JaquesS20}. These algorithms can also be adapted for multiple collision finding, where they will outperform the classical ones for some parameter ranges (but not all).

\subsection{Grover's Algorithm}\label{subsection:grover}

In this section, we recall Grover's quantum search algorithm~\cite{DBLP:conf/stoc/Grover96} and give a few necessary results for the rest of our analysis. Indeed, as shown in~\cite{DBLP:journals/siamcomp/MagniezNRS11}, an MNRS quantum walk actually emulates a quantum search, up to some error. If we manage to put this error aside, the analysis of the walk follows from the following lemmas.

\paragraph{Original Quantum Search.}
In the original setting of Grover's search, we have a function $g : \zo^{n} \rightarrow \zo$ and the goal is to find $x$ st. $g(x) = 1$ using queries to $g$. In the quantum setting, we have access to the unitary $O_g : \ket{x}\ket{b} \rightarrow \ket{x}\ket{b \oplus g(x)}$, which is an efficient quantum unitary if $g$ is efficiently computable. In particular we can compute 
$\ket{\psi_U} = \frac{1}{\sqrt{2^n}}\sum_{x \in \zo^n} \ket{x}\ket{g(x)}$ with a single call to $O_g$. Let $\eps = \frac{|\{x : g(x) = 1\}|}{2^n}$. We also define the normalized states
\[
\ket{\psi_{B}}  = \frac{1}{\sqrt{(1-\eps)2^n}}\sum_{x: g(x) = 0 } \ket{x}\ket{g(x)}, \quad
\ket{\psi_{G}}  = \frac{1}{\sqrt{\eps 2^n}}\sum_{x: g(x) = 1 } \ket{x}\ket{g(x)}
\]
and $\ket{\psi_U} = \sqrt{1-\eps}\ket{\psi_{B}} + \sqrt{\eps}\ket{\psi_{G}}$. Let $\hh = \mathrm{span}(\{\ket{\psi_B},\ket{\psi_G}\})$. Let $\mathrm{Rot}_{\theta}$ be the $\theta$-rotation unitary in $\hh$:
\begin{equation*}
	\mathrm{Rot}_\theta (\cos(\alpha)\ket{\phi_B} + \sin(\alpha)\ket{\psi_G}) = \cos(\alpha + \theta)\ket{\psi_B} + \sin(\alpha + \theta)\ket{\psi_G} \enspace.
\end{equation*}

For a fixed $\eps$, let $\alpha = \arcsin(\sqrt{\eps})$ so that 
$$\ket{\phi_U} = \sqrt{1-\eps}\ket{\psi_B} + \sqrt{\eps}\ket{\psi_G} = \cos(\alpha)\ket{\psi_B} + \sin(\alpha)\ket{\psi_G},$$
For a state $\ket{\psi} \in \hh$, let $\mathrm{Ref}_{\ket{\psi}}$ be the reflection over $\ket{\psi}$ in $\hh$: 
\begin{align*}
	Ref_{\ket{\psi}} \ket{\psi}  = \ket{\psi} \quad \textrm{and} \quad
	Ref_{\ket{\psi}} \ket{\psi^\bot}  = - \ket{\psi^\bot} 
\end{align*}
where $\ket{\psi^\bot}$ is any state in $\hh$ orthogonal to $\ket{\psi}$\footnote{For a fixed $\ket{\psi}$, $\ket{\psi^\bot}$ is actually unique up to a global phase.}  We have
\begin{align*}\label{Eq:RefToRot}
	\mathrm{Ref}_{\ket{\psi_U}}\mathrm{Ref}_{\ket{\psi_{B}}} = \mathrm{Rot}_{2\alpha} \enspace.
\end{align*}
Assume that we have access to a \emph{checking oracle} $O_{\mathrm{check}}$ which performs:
\begin{align*}
\begin{cases}
	O_{\mathrm{check}} \ket{\psi_{B}}\ket{0} & = \ket{\psi_{B}}\ket{0} \\
	O_{\mathrm{check}} \ket{\psi_{G}}\ket{0} & = \ket{\psi_{G}}\ket{1}
\end{cases}
\end{align*}

In the standard setting described above, this is just copying the last register. Starting from an ``initial state'' $\ket{\psi_U}$, we apply repeatedly an iterate consisting of a reflection over $\ket{\psi_U}$, and a reflection over $\ket{\psi_B}$. This progressively transforms the current state into the ``good state'' $\ket{\psi_G}$. Typically $\mathrm{Ref}_{\ket{\psi_U}}$ is constructed from a circuit that computes $\ket{\psi_U}$ and  $\mathrm{Ref}_{\ket{\psi_{B}}}$ is implemented using the checking oracle above: in that case, we are actually performing an \emph{amplitude amplification}~\cite{brassard2002quantum}.

\begin{proposition}[Grover's algorithm, known $\alpha$]\label{Proposition:GroverOriginal}
	Consider the following algorithm, with $\alpha \le \pi/4$:
	\begin{enumerate}
		\item Start from $\ket{\psi_U}$. 
		\item Apply $\mathrm{Rot}_{2\alpha} = \mathrm{Ref}_{\ket{\psi_U}}\mathrm{Ref}_{\ket{\psi_{B}}}$ $N$ times on $\ket{\psi_U}$ with $N = \lfloor\frac{\pi/2 - \alpha}{2\alpha}\rfloor$.
		\item Apply $O_{\mathrm{check}}$ and measure the last qubit.
	\end{enumerate}
	This procedure measures $1$ wp. at least $1 - 4\alpha^2$ and the resulting state is $\ket{\psi_{G}}$.
\end{proposition}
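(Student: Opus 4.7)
The plan is to track the rotation angle through the $N$ iterations and then bound the success probability at the measurement step.

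First I would observe that, by the relation $\mathrm{Ref}_{\ket{\psi_U}}\mathrm{Ref}_{\ket{\psi_B}} = \mathrm{Rot}_{2\alpha}$ recalled just before the proposition, applying $N$ iterations to the initial state $\ket{\psi_U} = \cos(\alpha)\ket{\psi_B} + \sin(\alpha)\ket{\psi_G}$ produces the state
\[
\ket{\psi_N} = \cos\bigl((2N+1)\alpha\bigr)\ket{\psi_B} + \sin\bigl((2N+1)\alpha\bigr)\ket{\psi_G}.
\]
Setting $\beta := (2N+1)\alpha$, the choice $N = \lfloor (\pi/2 - \alpha)/(2\alpha)\rfloor$ gives $N \leq (\pi/2-\alpha)/(2\alpha) < N+1$, which rearranges to
\[
\tfrac{\pi}{2} - 2\alpha < \beta \leq \tfrac{\pi}{2}.
\]
(The assumption $\alpha \leq \pi/4$ ensures $N \geq 0$, so the procedure is well defined.)

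Next I would apply $O_{\mathrm{check}}$ to $\ket{\psi_N}\ket{0}$. By linearity and the defining action of the checking oracle on $\ket{\psi_B}$ and $\ket{\psi_G}$, this yields
\[
\cos(\beta)\ket{\psi_B}\ket{0} + \sin(\beta)\ket{\psi_G}\ket{1}.
\]
Measuring the last qubit thus returns $1$ with probability $\sin^2(\beta)$, and in that case the remaining register is exactly $\ket{\psi_G}$, which is the second claim of the proposition.

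It remains to lower-bound $\sin^2(\beta)$. Since $\beta \in (\pi/2 - 2\alpha, \pi/2]$, we have $|\tfrac{\pi}{2} - \beta| < 2\alpha$ and hence $\sin(\beta) = \cos(\tfrac{\pi}{2} - \beta) \geq \cos(2\alpha)$. Squaring and using $\sin(2\alpha) \leq 2\alpha$ gives
\[
\sin^2(\beta) \geq \cos^2(2\alpha) = 1 - \sin^2(2\alpha) \geq 1 - 4\alpha^2,
\]
as required. The only subtle point is ensuring the rounding of $N$ keeps $\beta$ inside $[0,\pi/2]$ rather than overshooting it, which is precisely why the bound is stated as $\lfloor (\pi/2 - \alpha)/(2\alpha)\rfloor$ and why the assumption $\alpha \leq \pi/4$ is needed; once this is verified, the rest is an elementary trigonometric estimate.
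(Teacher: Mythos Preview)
Your proof is correct and follows essentially the same approach as the paper: define the final angle $\beta=(2N+1)\alpha$ (the paper calls it $\gamma$), use the choice of $N$ to pin $\beta\in(\pi/2-2\alpha,\pi/2]$, and conclude $\sin^2(\beta)\geq\cos^2(2\alpha)\geq 1-4\alpha^2$. Your explicit justification that $\alpha\leq\pi/4$ guarantees $N\geq 0$ is a small addition not spelled out in the paper's proof.
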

\begin{proof}
	Let us define $\gamma = \alpha + 2N\alpha$. We have 
	$$(Rot_{2\alpha})^n \ket{\psi_U} = \cos(\alpha + 2N\alpha)\ket{\psi_B} + \sin(\alpha + 2N\alpha)\ket{\psi_G} = \cos(\gamma)\ket{\psi_B} + \sin(\gamma)\ket{\psi_G}.$$
	Notice that we chose $N$ st. $\gamma \le \frac{\pi}{2} < \gamma + 2\alpha$ so $\frac{\pi}{2} - \gamma \in [0,2\alpha)$. After applying the checking oracle, we obtain the state 
	$$ \cos(\gamma)\ket{\psi_B}\ket{0} + \sin(\gamma)\ket{\psi_G}\ket{1}.$$
	Measuring the last qubit gives outcome $1$ with probability $\sin^2(\gamma)$ and the resulting state in the first register is $\ket{\psi_G}$. In order to conclude, we compute
	$$ \sin^2(\gamma) = \cos^2(\pi/2 - \gamma) \ge \cos^2(2\alpha) \ge 1 - 4\alpha^2. \qquad \ensuremath{\Box} $$
\end{proof}		


In our algorithms, we will start not from the uniform superposition $\ket{\psi_U}$, but from the \emph{bad subspace} $\ket{\psi_B}$. We show that this makes little difference.

\begin{proposition}[Starting from $\ket{\psi_B}$, known $\alpha$]\label{Proposition:GroverFromBad}
	Consider the following algorithm, with $\alpha \le \pi/4$:
	\begin{enumerate}
		\item Start from $\ket{\psi_B}$. 
		\item Apply $\mathrm{Rot}_{2\alpha} = \mathrm{Ref}_{\ket{\psi_U}}\mathrm{Ref}_{\ket{\psi_{B}}}$ $N'$ times on $\ket{\psi_B}$ with $N' = \lfloor\frac{\pi/2}{2\alpha}\rfloor$.
		\item Apply the checking oracle and measure the last qubit.
	\end{enumerate}
	This procedure measures $1$ wp. at least $1 - 4\alpha^2$ and the resulting state is $\ket{\psi_{G}}$.
\end{proposition}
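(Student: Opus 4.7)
The plan is to mimic almost verbatim the proof of \Cref{Proposition:GroverOriginal}, exploiting the fact that $\ket{\psi_B}$ is simply the angle-$0$ point in the two-dimensional invariant subspace $\hh$, whereas $\ket{\psi_U}$ sits at angle $\alpha$. So rotating from $\ket{\psi_B}$ needs only one extra ``half-step'' of angle $\alpha$ to catch up, which is exactly what the formula $N' = \lfloor \pi/(4\alpha)\rfloor$ (versus $N = \lfloor (\pi/2-\alpha)/(2\alpha)\rfloor$) encodes.

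More concretely, first I would observe that $\ket{\psi_B} = \cos(0)\ket{\psi_B} + \sin(0)\ket{\psi_G}$ already lies in $\hh$, and the iterate $\mathrm{Ref}_{\ket{\psi_U}}\mathrm{Ref}_{\ket{\psi_B}} = \mathrm{Rot}_{2\alpha}$ preserves $\hh$. Applying it $N'$ times gives
\[
(\mathrm{Rot}_{2\alpha})^{N'} \ket{\psi_B} = \cos(2N'\alpha)\ket{\psi_B} + \sin(2N'\alpha)\ket{\psi_G}.
\]
Setting $\gamma' = 2N'\alpha$, the choice $N' = \lfloor \pi/(4\alpha)\rfloor$ gives $\gamma' \le \pi/2 < \gamma' + 2\alpha$, i.e.\ $\pi/2 - \gamma' \in [0,2\alpha)$.

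Then I would apply the checking oracle to obtain $\cos(\gamma')\ket{\psi_B}\ket{0} + \sin(\gamma')\ket{\psi_G}\ket{1}$. Measuring the ancilla yields outcome $1$ with probability $\sin^2(\gamma') = \cos^2(\pi/2-\gamma') \ge \cos^2(2\alpha) \ge 1 - 4\alpha^2$, and conditioned on this the first register collapses to $\ket{\psi_G}$, which is the claim.

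There is no real obstacle here: the only tiny subtlety is checking that $2\alpha \le \pi/2$ (so that $\cos^2(2\alpha) \ge 1-4\alpha^2$ via the $\cos\theta \ge 1-\theta^2/2$ type bound used in the previous proof), which follows from the hypothesis $\alpha \le \pi/4$. I would simply point out that the proof is identical to that of \Cref{Proposition:GroverOriginal}, with the starting angle $\alpha$ replaced by $0$ and $N$ replaced by $N'$ accordingly.
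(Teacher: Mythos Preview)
Your proposal is correct and follows essentially the same argument as the paper's own proof: define $\gamma' = 2N'\alpha$, observe from the floor that $\pi/2 - \gamma' \in [0,2\alpha)$, and bound $\sin^2(\gamma') = \cos^2(\pi/2-\gamma') \ge \cos^2(2\alpha) \ge 1-4\alpha^2$. The paper likewise remarks that the proof is ``essentially the same'' as that of \Cref{Proposition:GroverOriginal}.
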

\begin{proof}
	The proof is essentially the same as the previous one. 
	Let $\gamma' =  2N'\alpha$. We have 
	$$(Rot_{2\alpha})^{N'} \ket{\psi_B} = \cos(2N'\alpha)\ket{\psi_B} + \sin(2N'\alpha)\ket{\psi_G} = \cos(\gamma')\ket{\psi_B} + \sin(\gamma')\ket{\psi_G}.$$
	Notice that we chose $N'$ st. $\gamma' \le \frac{\pi}{2} < \gamma' + 2\alpha$ so $\frac{\pi}{2} - \gamma' \in [0,2\alpha)$. After applying the checking oracle, we obtain the state 
	$$ \cos(\gamma')\ket{\psi_B}\ket{0} + \sin(\gamma')\ket{\psi_G}\ket{1}.$$
	Measuring the last qubit gives $1$ wp. $\sin^2(\gamma')$ and the resulting state in the first register is $\ket{\phi_G}$. In order to conclude, we compute 
	$$ \sin^2(\gamma') = \cos^2(\pi/2 - \gamma') \ge \cos^2(2\alpha) \ge 1 - 4\alpha^2. \qquad \ensuremath{\Box}  $$
\end{proof}

After applying the check and measuring, if we don't succeed, we obtain the state $\ket{\psi_B}$ again. So we can run the quantum search again.

In Grover's algorithm, we have a procedure to construct $\ket{\psi_U}$ and we use this procedure to initialize the algorithm and to perform the operation $\mathrm{Ref}_{\ket{\psi_U}}$. A quantum walk will have the same general structure as Grover's algorithm, but we will manipulate very large states $\ket{\psi_U}$. Though $\ket{\psi_U}$ is long to construct (the \emph{setup} operation), performing $\mathrm{Ref}_{\ket{\psi_U}}$ will be less costly. 

In the MNRS framework, $\ket{\psi_U}$ is chosen as the unique eigenvector of eigenvalue $1$ of an operator related to a random walk in a graph. To perform $\mathrm{Ref}_{\ket{\psi_U}}$ efficiently, we use phase estimation on this operator.



\section{Quantum Walks for Collision Finding}\label{sec:prelim-walk}

In this section, we present MNRS quantum walks, which underlie most cryptographic applications of quantum walks to date, and give important details on their actual implementation using a \emph{vertex-coin encoding}.

\subsection{Definition and Example}

We consider a regular, undirected graph $G = (V, E)$, which in cryptographic applications (e.g., collision search), is usually a Johnson graph (as in this paper) or a product of Johnson graphs (a case detailed e.g. in~\cite{DBLP:conf/pqcrypto/KachigarT17}). 

\begin{definition}[Johnson graph]
The Johnson graph $J(N, R)$ is a regular, undirected graph whose vertices are the subsets of $[N]$ containing $R$ elements, with an edge between two vertices $v$ and $v'$ iff $|v\cap v'|=R-1$. In other words, $v$ is adjacent to $v'$ if $v'$ can be obtained from $v$ by removing an element and adding an element from $[N] \backslash v$ in its place.
\end{definition}

%
%
%

In collision search, a vertex in the graph specifies a set of $R$ inputs to the function $f$ under study, where its domain $\zo^n$ is identified with $[2^n]$. Let $M \subseteq V$ be a set of \emph{marked} vertices, e.g., all the subsets $R \subseteq \zo^n$ which contain a collision of $f$: $\exists x,y \in R, x \neq y, f(x) = f(y)$. A classical \emph{random walk} on $G$ finds a marked vertex using~\Cref{alg:random-walk}.

\begin{algorithm}[htbp]
\DontPrintSemicolon
\textbf{Setup} an arbitrary vertex $x \in V$\;
\Repeat{the current vertex is marked}{
\Repeat{the current vertex is uniformly random}{
\textbf{Update}: move to a random adjacent vertex\;
}
\textbf{Check} if the current vertex is marked\;
}
\caption{Classical random walk on $G$}\label{alg:random-walk}
\end{algorithm}

The quantum walk is analogous to this process. Let $\epsilon = \frac{|M|}{|V|}$ be the proportion of marked vertices and $\delta$ be the spectral gap of $G$. Starting from any vertex, after $\bigO{\frac{1}{\delta}}$ updates, we sample a vertex of the graph uniformly at random. For a Johnson graph $J(N,R)$, $\delta = \frac{N}{R(N-R)} \simeq \frac{1}{R}$. Let $\mathsf{S}$ be the time to \textbf{Setup}, $\mathsf{U}$ the time to \textbf{Update}, $\mathsf{C}$ the time to \textbf{Check} a given vertex. Then~\Cref{alg:random-walk} finds a marked vertex in time: $\bigO{ \mathsf{S} + \frac{1}{\epsilon} \left( \frac{1}{\delta} \mathsf{U} + \mathsf{C} \right) }$. Magniez \emph{et al.}~\cite{DBLP:journals/siamcomp/MagniezNRS11} show how to translate this generically in the quantum setting, provided that quantum analogs of the \textbf{Setup}, \textbf{Update} and \textbf{Check} can be implemented.

\begin{theorem}[From~\cite{DBLP:journals/siamcomp/MagniezNRS11}]\label{thm:qwalk}
Assume that quantum algorithms for \textbf{Setup}, \textbf{Update} and \textbf{Check} are given. Then there exists a quantum algorithm that finds a marked vertex in time: $ \bigOt{ \mathsf{S} + \frac{1}{\sqrt{\epsilon}} \left( \frac{1}{\sqrt{\delta}} \mathsf{U} + \mathsf{C} \right) } $
instead of $\bigO{ \frac{1}{\sqrt{\epsilon}} \left(\mathsf{S}+\mathsf{C}\right)}$ with a naive search.
\end{theorem}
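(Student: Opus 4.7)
The plan is to instantiate the amplitude amplification framework of \Cref{Proposition:GroverOriginal} with the walk operator in a vertex-coin encoding, using phase estimation to implement the reflection over the initial state. First, I would set up the walk space: using the vertex-coin encoding, the ``uniform'' state is
\[
\ket{\psi_U} = \frac{1}{\sqrt{|V|}} \sum_{v \in V} \ket{v}\ket{\text{data}(v)} \otimes \ket{\psi_v}\,,
\]
where $\ket{\psi_v}$ is the uniform superposition over the neighbors of $v$ (or, equivalently, the walk operator acts on the uniform superposition over directed edges). By hypothesis, the \textbf{Setup} procedure prepares $\ket{\psi_U}$ in time $\mathsf{S}$. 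Define $\ket{\psi_G}$ to be the normalized projection of $\ket{\psi_U}$ onto vertices lying in $M$, and $\ket{\psi_B}$ its orthogonal complement inside the two-dimensional plane $\mathcal{H}$; by construction $\ket{\psi_U} = \sqrt{1-\epsilon}\ket{\psi_B} + \sqrt{\epsilon}\ket{\psi_G}$, with $\alpha = \arcsin(\sqrt{\epsilon})$.

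Second, I would identify the two reflections needed by \Cref{Proposition:GroverOriginal}. The reflection $\mathrm{Ref}_{\ket{\psi_B}}$ can be implemented in time $O(\mathsf{C})$ by running \textbf{Check} coherently to flag marked vertices, applying a phase flip conditioned on the flag, and uncomputing \textbf{Check}. The reflection $\mathrm{Ref}_{\ket{\psi_U}}$ is the nontrivial one: since $\ket{\psi_U}$ is not produced cheaply, one cannot afford to recompute it via \textbf{Setup} inside every iteration.

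Third, I would implement $\mathrm{Ref}_{\ket{\psi_U}}$ by phase estimation on the MNRS walk operator $W = R_2 R_1$, where $R_1$ and $R_2$ are the two ``coin'' reflections that each correspond to one \textbf{Update} step in the vertex-coin encoding, so that one application of $W$ costs $O(\mathsf{U})$. The Szegedy correspondence turns the classical spectral gap $\delta$ into a phase gap of order $\sqrt{\delta}$ between the unique eigenvalue-$1$ eigenvector $\ket{\psi_U}$ and the rest of the spectrum. Thus phase estimation with precision $c\sqrt{\delta}$ distinguishes $\ket{\psi_U}$ from its orthogonal complement, and after conditionally flipping the phase on a nonzero outcome and uncomputing the estimation register, we obtain an approximation of $\mathrm{Ref}_{\ket{\psi_U}}$ in time $\widetilde{O}(\mathsf{U}/\sqrt{\delta})$. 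Plugging the two reflections into the $\lfloor (\pi/2-\alpha)/(2\alpha) \rfloor = O(1/\sqrt{\epsilon})$ iterations of \Cref{Proposition:GroverOriginal}, followed by a final \textbf{Check} measurement, yields the claimed complexity
\[
\widetilde{O}\Bigl(\mathsf{S} + \tfrac{1}{\sqrt{\epsilon}} \bigl(\tfrac{1}{\sqrt{\delta}} \mathsf{U} + \mathsf{C}\bigr)\Bigr).
\]

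The main obstacle I expect is controlling the error propagation: each application of $\mathrm{Ref}_{\ket{\psi_U}}$ is only approximate (both because phase estimation gives a finite-precision eigenphase and because the non-$1$ eigenvalues form a continuous spectrum down to $\sqrt{\delta}$), and these errors are composed $O(1/\sqrt{\epsilon})$ times. Handling this requires choosing the phase estimation precision to be a small polylogarithmic factor below $\sqrt{\delta}$, and bounding the total trace-distance deviation from the ideal Grover trajectory by, say, $1/4$, so that the success probability guaranteed by \Cref{Proposition:GroverOriginal} is preserved up to a constant. This is precisely where the $\widetilde{O}$ absorbs the logarithmic overheads of phase estimation, and it is the step whose detailed analysis I would import from~\cite{DBLP:journals/siamcomp/MagniezNRS11} rather than reproduce in full.
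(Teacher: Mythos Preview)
Your proposal is correct and matches the paper's own treatment. Note that the paper does not give an independent proof of \Cref{thm:qwalk}: it is cited directly from~\cite{DBLP:journals/siamcomp/MagniezNRS11}, and the surrounding Sections~3.2--3.3 merely unpack the mechanism in exactly the way you describe --- \Cref{Proposiiton:MNRS_Main} supplies the $\Theta(\sqrt{\delta})$ phase gap of $W$, phase estimation on $W$ (at cost $\widetilde{O}(\mathsf{U}/\sqrt{\delta})$) approximates $\mathrm{Ref}_{\ket{\psi_U}}$, \textbf{Check} implements $\mathrm{Ref}_{\ket{\psi_B}}$, and \Cref{Proposition:GroverOriginal} provides the $O(1/\sqrt{\epsilon})$ iteration count; the paper likewise waves the error of phase estimation into the $\widetilde{O}$ rather than analyzing it in detail.
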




Using this framework generically, we can recover the complexity of Ambainis' algorithm for collision search: $\bigOt{2^{m/3}}$ for any codomain bit-size $m$. We use the Johnson graph $J(2^n, 2^{m/3})$. Its spectral gap is approximately $2^{-m/3}$. A vertex is marked if and only if it contains a collision, so the probability of being marked is approximately $2^{2m/3-m} = 2^{-m/3}$. Using a quantum data structure for unordered sets, we can implement the Setup operation in time $\bigOt{2^{m/3}}$, the Update and the Check in $\mathsf{poly}(n)$. The formula of~\Cref{thm:qwalk} gives the complexity $\bigOt{2^{m/3}}$.

\subsection{Details of the MNRS Framework}

In the $d$-regular graph $G = (V,E)$, for each $x \in V$, let $N_x$ be the set of neighbors of $x$, of size $d$. In the case $G = J(N,R)$, we have $d = R(N-R)$. For a vertex $x$, let $\ket{x}$ be an arbitrary encoding of $x$ as a quantum state, let $D(x)$ be a \emph{data structure} associated to $x$, and let $\ket{\xh} = \ket{x} \ket{D(x)}$.


\begin{remark}
The encoding of $x$ is commonly thought of as the set itself, and the data structure as the images of the set by $f$. But whenever we look at quantum walks from the perspective of time complexity (and not query complexity), an efficient quantum data structure is already required for $x$ itself, i.e., an unordered set data structure in the case of a Johnson graph, and one cannot really separate $x$ from $D(x)$. This is why we will favor the notation $\ket{\xh}$.
\end{remark}

For a vertex $x$, let $\ket{p_x}$ be the uniform superposition over its neighbors: $\ket{p_x} = \frac{1}{\sqrt{d}} \sum_{y \in N_x} \ket{y}$, and: $\ket{\widehat{p_x}} = \frac{1}{\sqrt{d}} \sum_{y \in N_x} \ket{\yh}$. From now on, we consider a walk on \emph{edges} rather than vertices in the graph, and introduce:
\begin{align*}
\begin{cases}
\ket{\psi_U} = \frac{1}{\sqrt{|V|}} \sum_{x \in V} \ket{\widehat{x}} \ket{p_x} \text{ the superposition of vertices (and neighbors)} \\
\ket{\psi_M} = \frac{1}{\sqrt{|M|}} \sum_{x \in M} \ket{\widehat{x}} \ket{p_x} \text{ the superposition of marked vertices} \\
A = \mathrm{span} \{\ket{\xh}\ket{p_x}\}_{x \in V} \\
B = \mathrm{span} \{\ket{\pyh}\ket{y}\}_{y \in V} \\
\end{cases}
\end{align*}

Let $\mathrm{Ref}_A$ and $\mathrm{Ref}_B$ be respectively the reflection over the space $A$ and the space $B$. The core of the MNRS framework is to use these operations to emulate a reflection over $\ket{\psi_U}$. By alternating such reflections with reflections over $\ket{\psi_M}$ (using the checking procedure), the quantum walk behaves exactly as a quantum search, and the analysis of~\Cref{subsection:grover} applies.

\begin{proposition}[From~\cite{DBLP:journals/siamcomp/MagniezNRS11}] \label{Proposiiton:MNRS_Main}
	Let $W = \mathrm{Ref}_B \mathrm{Ref}_A$. We have $
	\bra{\psi_{U}} W\ket{\psi_{U}} = 1$. 
	For any other eigenvector $\ket{\psi}$ of $W$, we have 
	$\bra{\psi} W\ket{\psi}  = e^{i\theta}$
	with $\theta  \in [2\sqrt{\delta},\pi/2]$.
\end{proposition}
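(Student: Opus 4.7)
The plan is to prove this via Szegedy's spectral correspondence between the walk operator $W=\mathrm{Ref}_B \mathrm{Ref}_A$ and the classical random walk matrix $P$ of the underlying regular graph $G$. The two parts of the statement correspond to the eigenvalue $1$ of $P$ (uniform stationary distribution) and the spectral gap of $P$, respectively.

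First I would verify the easy direction, namely $W\ket{\psi_U}=\ket{\psi_U}$. It suffices to check $\ket{\psi_U}\in A\cap B$, because then both $\mathrm{Ref}_A$ and $\mathrm{Ref}_B$ fix it. By definition $\ket{\psi_U}=\frac{1}{\sqrt{|V|}}\sum_{x\in V}\ket{\widehat{x}}\ket{p_x}\in A$. Using regularity of $G$, the double sum $\frac{1}{\sqrt{|V|d}}\sum_{x\in V}\sum_{y\in N_x}\ket{\widehat{x}}\ket{y}$ can be reindexed by swapping the roles of $x$ and $y$: each pair $(x,y)$ with $y\in N_x$ is an edge, and edges are symmetric, so the same sum equals $\frac{1}{\sqrt{|V|}}\sum_{y\in V}\ket{\widehat{p_y}}\ket{y}$, which is in $B$. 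Hence $\ket{\psi_U}\in A\cap B$ and the eigenvalue $1$ is confirmed.

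Next I would analyze the remaining spectrum of $W$ on $A+B$ by the standard reduction to principal angles. Writing $\mathrm{Ref}_A=2\Pi_A-I$ and $\mathrm{Ref}_B=2\Pi_B-I$, the nontrivial spectrum of $W$ restricted to $A+B$ is determined by the singular values of the ``discriminant'' operator $\Pi_A\Pi_B$: if $\cos\theta$ is such a singular value with $\theta\in(0,\pi/2]$, then $e^{\pm 2i\theta}$ are eigenvalues of $W$. The key computation is that, in the coined encoding we use, the matrix elements $\bra{\widehat{x}}\bra{p_x}(\Pi_A\Pi_B)\ket{\widehat{p_y}}\ket{y}$ coincide, up to the data-structure phases cancelling between $A$ and $B$, with the entries of the symmetric transition matrix $P_{xy}=\frac{1}{d}\mathbf{1}[y\in N_x]$ of the random walk on $G$. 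Consequently the singular values of $\Pi_A\Pi_B$ are exactly the eigenvalues of $P$ in absolute value.

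Now I would invoke the definition of the spectral gap: the largest eigenvalue of $P$ is $1$, contributing the fixed vector $\ket{\psi_U}$ already accounted for, and all other eigenvalues satisfy $|\lambda|\le 1-\delta$. Therefore every other eigenvector of $W$ has phase $e^{i\theta}$ with $\theta=2\theta_j$, where $\cos\theta_j\le 1-\delta$. Using $\arccos(1-\delta)\ge \sqrt{\delta}$ for $\delta\in[0,1]$, this gives $\theta\ge 2\sqrt{\delta}$, and the trivial upper bound $\theta\le \pi/2$ holds because the singular values of $\Pi_A\Pi_B$ are nonnegative, so $\theta_j\in[0,\pi/4]$.

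The main obstacle is the technical bookkeeping in the second step, namely checking that the coin-vertex encoding and the data-structure register $D(\cdot)$ do not spoil the identification of $\Pi_A\Pi_B$ with $P$. Since each $\ket{\widehat{x}}$ tensors a definite data-structure state with $\ket{x}$, the inner products collapse to those of the ``bare'' vertex encoding, and the identification goes through; this is exactly the point where the \emph{vertex-coin encoding} discussed in the previous subsection has to be unpacked carefully so that $\braket{\widehat{x}}{\widehat{x'}}=\delta_{x,x'}$ and the reflection $\mathrm{Ref}_B$ acts only on the coin register adjacent to $\ket{y}$.
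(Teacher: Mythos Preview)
The paper does not prove this proposition at all: it is stated with the attribution ``From~[MNRS]'' and used as a black box. So there is no ``paper's own proof'' to compare against; your proposal is the standard Szegedy/MNRS spectral argument, and for the essential content (the phase gap $|\theta|\ge 2\sqrt{\delta}$) it is correct. The verification that $\ket{\psi_U}\in A\cap B$ via the edge-symmetry reindexing is right, the identification of the Gram matrix $\langle\widehat{x},p_x\,|\,\widehat{p_y},y\rangle=\tfrac{1}{d}\mathbf{1}[x\sim y]=P_{xy}$ is right for a $d$-regular graph (and your remark that the data-structure register is a deterministic function of $x$, so $\braket{\widehat{x}}{\widehat{x'}}=\delta_{x,x'}$, is exactly what is needed), and the bound $\arccos(1-\delta)\ge\sqrt{\delta}$ then yields $|\theta|=2\theta_j\ge 2\sqrt{\delta}$.

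There is, however, a genuine slip in your last sentence. You write that ``the trivial upper bound $\theta\le\pi/2$ holds because the singular values of $\Pi_A\Pi_B$ are nonnegative, so $\theta_j\in[0,\pi/4]$.'' Nonnegativity of singular values only gives $\theta_j=\arccos(\sigma_j)\in[0,\pi/2]$, hence $\theta=2\theta_j\in[0,\pi]$, not $[0,\pi/2]$. In fact the standard spectral lemma gives eigenphases ranging all the way to $\pi$ (for instance $\sigma_j=0$, i.e.\ a pair of orthogonal directions in $A$ and $B$, yields eigenvalue $-1$). So either the $\pi/2$ in the proposition is a typo for $\pi$, or it is an inessential simplification---only the lower bound $2\sqrt{\delta}$ is used downstream, in the phase-estimation step that implements $\mathrm{Ref}_{\ket{\psi_U}}$. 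Your argument does not establish the $\pi/2$ upper bound as written, and no argument along these lines will, because it is not true in general.
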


To reflect over $\ket{\psi_U}$, we perform a \emph{phase estimation} of the unitary $W$, which allows to separate the part with eigenvalue 1, from the part with eigenvalue $e^{i\theta}$ with $\theta \in [2\sqrt{\delta},\pi/2]$. The phase estimation circuit needs to call $W$ a total of $\bigO{\frac{1}{\sqrt{\delta}}}$ times to estimate $\theta$ up to sufficient precision. It has some error, which can be made insignificant with a polynomial increase in complexity; thus in the following, we will consider the reflection $\mathrm{Ref}_{U}$ to be exact.

To construct $W$, we need to implement $\mathrm{Ref}_A$ and $\mathrm{Ref}_B$. We first remark that:
\begin{equation}
\mathrm{Ref}_B = \SWUP \circ \mathrm{Ref}_A \circ \SWUP \enspace ,
\end{equation}
where $\SWUP \ket{\xh}\ket{y} = \ket{\yh}\ket{x}$. This $\SWUP$ (Swap-Update) operation can furthermore be decomposed into an update of the database ($\UP_D$) followed by a register swap:
\begin{equation}
\ket{\xh}\ket{y} = \ket{x}\ket{D(x)}\ket{y} \xrightarrow{\UP_D} \ket{x}\ket{D(y)}\ket{y} \xrightarrow{\Swap} \ket{y}\ket{D(y)}\ket{x} = \ket{\yh}\ket{x} \enspace ,
\end{equation}
so $\SWUP = \Swap \circ \UP_D$. 

We would then implement $\mathrm{Ref}_A$ using an update unitary that, from a vertex $x$, constructs the uniform superposition of neighbors. However this would require us to write $\log_2(|V|)$ data, and in practice, $|V|$ is doubly exponential (the vertex is represented by an exponential number of bits).
Thankfully, in $d$-regular graphs, and in particular in Johnson graphs, we can avoid this loophole by making the encoding of edges more compact. Instead of storing a pair of vertices $(x,y)$, which will eventually result in having to rewrite entire vertices, we can store a single vertex and a \emph{direction}, or \emph{coin}.

\subsection{Vertex-coin Encoding}

The encoding is a reversible operation: $O_{\Enc} \ket{\xh}\ket{y} = \ket{\xh}\ket{c_{x \rightarrow y}} \enspace$,
which compresses an edge $(x,y)$ by replacing $y$ by a much smaller register of size $\lceil \log_2(d) \rceil$. Note that we only need the \emph{existence} of such a circuit. We never use it during the algorithms; all operations are directly performed using the compact encoding.

Let $\ket{\psi_{\Unif}^{coin}} = \frac{1}{\sqrt{d}} \sum_{c} \ket{c}$ be the uniform superposition of coins. In the vertex-coin encoding, $\mathrm{Ref}_A$ corresponds to $I \otimes Ref_{\ket{\psi_{\Unif}^{coin}}}$:
$$ \mathrm{Ref}_A = O_{\Enc}^{-1} \circ \left(I \otimes \mathrm{Ref}_{\ket{\psi_{\Unif}^{coin}}}\right) \circ O_{\Enc}.$$
Now, for the $\SWUP$ operation, we have to decompose again $\UP_D$ and $\Swap$ in the encoded space. First, we define $\UP'_D$ such that: 
$$\ket{x}\ket{D(x)}\ket{c_{x \rightarrow y}} \xrightarrow{UP'_{D}} \ket{x}\ket{D(y)}\ket{c_{x \rightarrow y}}.$$

Moreover, we define $\Swap'$ such that:
$$\ket{x}\ket{c_{x \rightarrow y}} \xrightarrow{\Swap'} \ket{y}\ket{c_{y \rightarrow x}}.$$
and we define $\SWUP' = \Swap' \circ \UP'_D$ (we abuse notation here, by extending $\Swap'$ where we apply the identity to the middle register), so:
\begin{equation*}
\SWUP' \ket{\xh}\ket{c_{x \rightarrow y}} = \ket{\yh}\ket{c_{y \rightarrow x}} \enspace,
\end{equation*}
and $\SWUP' = O_{\Enc} \circ \SWUP \circ O_{\Enc}^{-1}$. So we define 
\begin{equation}
\begin{cases}
	\mathrm{Ref}'_A = I \otimes \mathrm{Ref}_{\ket{\psi_{\Unif}^{coin}}} = O_{\Enc} \circ Ref_A \circ O_{\Enc}^{-1} \\
	\mathrm{Ref}'_B = \SWUP' \circ \mathrm{Ref}'_A \circ \SWUP' = O_{\Enc} \circ \mathrm{Ref}_B \circ O_{\Enc}^{-1}  \\
	W' = \mathrm{Ref}'_B \circ \mathrm{Ref}'_A 
	\end{cases}
\end{equation}
By putting everything together, we have $W' = O_{\Enc} \circ W \circ O_{\Enc}^{-1}$. So we can use Proposition~\ref{Proposiiton:MNRS_Main} to have the spectral properties and perform phase estimation on $W'$, and combine afterwards with~\autoref{Proposition:GroverOriginal}. Since constructing the uniform superposition of coins is trivial, all relies on the unitary $\SWUP'{}$.

\begin{theorem}[MNRS, adapted]\label{thm-mnrs-adapted}
Let $\ket{\xh}$ be an encoding of the vertex $x$ (incl. data structure) and assume that a vertex-coin encoding is given. Let $\alpha = \arcsin \sqrt{\epsilon}$. Starting from the state: $\frac{1}{\sqrt{|V|}} \sum_{x \in V} \ket{\xh} \ket{\psi_{\Unif}^{coin}}$, applying $\floor{ \frac{\pi/2 -\alpha}{2\alpha}}$ iterates of: $\bullet$~a checking procedure which flips the phase of marked vertices; $\bullet$~a phase estimation of $W'$; then apply the checking again and measure. With probability at least $1-4\alpha^2$, we measure 1 and collapse on the uniform superposition of marked vertices.
\end{theorem}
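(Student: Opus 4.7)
The plan is to reduce the statement to Proposition~\ref{Proposition:GroverOriginal} (Grover amplification with known angle), executed inside the two-dimensional subspace spanned by the encoded superpositions $O_{\Enc}\ket{\psi_U}$ and $O_{\Enc}\ket{\psi_M}$. First, I would verify that the initial state described in the theorem is precisely $O_{\Enc}\ket{\psi_U}$: by the defining identity $O_{\Enc}\ket{\xh}\ket{p_x}=\ket{\xh}\ket{\psi_{\Unif}^{coin}}$ of the vertex-coin encoding,
\[
O_{\Enc}\ket{\psi_U} \;=\; \frac{1}{\sqrt{|V|}}\sum_{x\in V}\ket{\xh}\ket{\psi_{\Unif}^{coin}},
\]
and the same identity applied to $\ket{\psi_M}$ identifies the \emph{target} state of the algorithm as $O_{\Enc}\ket{\psi_M}$. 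Setting $\alpha=\arcsin\sqrt{\epsilon}$ and writing $\widetilde{\psi_B}$ for the unit vector satisfying $O_{\Enc}\ket{\psi_U}=\cos\alpha\cdot\widetilde{\psi_B}+\sin\alpha\cdot O_{\Enc}\ket{\psi_M}$ puts us in exactly the two-dimensional configuration underlying Proposition~\ref{Proposition:GroverOriginal}.

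Next, I would identify each step of the iterate with the corresponding Grover reflection. The checking procedure flips the phase of every marked vertex regardless of the coin register, so on the invariant plane it fixes $\widetilde{\psi_B}$ (a combination of unmarked vertices) and inverts $O_{\Enc}\ket{\psi_M}$; it therefore coincides with $\mathrm{Ref}_{\widetilde{\psi_B}}$. For the phase estimation step, Proposition~\ref{Proposiiton:MNRS_Main} applied to $W'=O_{\Enc}\circ W\circ O_{\Enc}^{-1}$, which has the same spectrum as $W$, ensures that $O_{\Enc}\ket{\psi_U}$ is the unique eigenvector of $W'$ with eigenvalue $1$, while every other eigenvector has eigenphase at least $2\sqrt{\delta}$ away from $0$. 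Phase estimation at precision $\Theta(\sqrt{\delta})$, followed by a phase flip conditioned on the zero-estimate branch and an uncomputation of the estimation register, therefore implements $\mathrm{Ref}_{O_{\Enc}\ket{\psi_U}}$ up to an error that can be made inverse-polynomially small.

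Composing the two reflections yields $\mathrm{Ref}_{O_{\Enc}\ket{\psi_U}}\,\mathrm{Ref}_{\widetilde{\psi_B}} = \mathrm{Rot}_{2\alpha}$ on the invariant plane. Applying Proposition~\ref{Proposition:GroverOriginal} with $N=\lfloor(\pi/2-\alpha)/(2\alpha)\rfloor$ iterations starting from $O_{\Enc}\ket{\psi_U}$, and then applying one last checking unitary that coherently copies the marking flag to a fresh qubit and measures it, outputs $1$ with probability at least $1-4\alpha^2$ and collapses the main register onto $O_{\Enc}\ket{\psi_M}$, which is the claimed uniform superposition of marked vertices in encoded form.

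The main obstacle is the accumulation of phase-estimation errors across the $N$ iterations, since the reflections $\mathrm{Ref}_{O_{\Enc}\ket{\psi_U}}$ are only implemented approximately. I would handle this, as is standard, by tightening the phase-estimation precision so that the per-iterate operator-norm error is below $1/(N\,\poly(n))$; this adds only polylogarithmic overhead absorbed by the $\bigOt{\cdot}$ notation and preserves the $1-4\alpha^2-o(1)$ success probability stated in the theorem.
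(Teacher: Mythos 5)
Your proposal is correct and follows essentially the same route as the paper: identify the starting state as $O_{\Enc}\ket{\psi_U}$, use \Cref{Proposiiton:MNRS_Main} on $W'=O_{\Enc}\circ W\circ O_{\Enc}^{-1}$ so that phase estimation implements $\mathrm{Ref}_{\ket{\psi_U}}$ in the encoded space, treat the checking step as $\mathrm{Ref}_{\ket{\psi_B}}$, and conclude via \Cref{Proposition:GroverOriginal}, with phase-estimation error suppressed by a polynomial increase in precision. The only cosmetic difference is that the paper declares the reflection exact after this suppression, whereas you carry an explicit $o(1)$ loss in the success probability.
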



\paragraph{Coins for a Johnson Graph.}
In a Johnson graph $J(N,R)$, a coin $c = (j,z)$ is a pair where:
\begin{itemize}
\item $j \in [R]$ is the index of the element that will be removed from the current vertex (given an arbitrary ordering, e.g. the lexicographic ordering of bit-strings).
\item $z \in [N-R]$ is the index of an element that does not belong to the current vertex, and will be added as a replacement.
\end{itemize}

\newcommand{\ALLOC}{\mathrm{ALLOC}}
\newcommand{\INSERT}{\mathrm{INSERT}}
\newcommand{\LOOKUP}{\mathrm{LOOKUP}}
\newcommand{\QLOOKUP}{\mathrm{QLOOKUP}}

\newcommand{\CHECK}{\mathrm{CHECK}}
\newcommand{\EXTRACT}{\mathrm{EXTRACT}}



\section{A Chained Quantum Walk to Find Many Collisions}
\label{sec:many_collisions}

In this section, we prove our main result.

\begin{theorem}\label{thm:main-result}
Let $f : \zo^n \to \zo^m$, $n \leq m \leq 2n$ be a random function. Let $k \leq \min(2n-m, m/4)$. There exists a quantum algorithm making $\bigO{2^{2k/3 + m/3}}$ quantum queries to $f$ and using $\bigOt{ 2^{2k/3 + m/3} }$ Clifford+T+\textsf{qRAMR}+\textsf{qRAMW} gates, that outputs $2^k$ collision pairs of $f$.
\end{theorem}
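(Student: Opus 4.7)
The plan is to apply the MNRS framework of \Cref{thm:qwalk} on the Johnson graph $J(2^n, R)$ for a carefully chosen set size $R$, and to reuse the same setup across $2^k$ successive searches rather than restarting the walk from scratch each time. A vertex is a subset of $R$ inputs of $f$, paired with the quantum radix tree of \Cref{sec:data} storing their images; it is marked when the data structure witnesses a collision. Under the standard random-function heuristic, the marked fraction is $\epsilon \simeq R^2/2^m$, the Johnson graph has spectral gap $\delta \simeq 1/R$, and setup, update and checking run in time $\widetilde{O}(R)$, $\mathsf{polylog}(n)$ and $\mathsf{polylog}(n)$ respectively in the QRAQM model.

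First, I would pay the setup cost $\widetilde{O}(R)$ only once, and then iterate $2^k$ times the following procedure: run one MNRS walk to collapse onto a marked vertex via \Cref{thm-mnrs-adapted}, coherently extract one colliding pair from the radix tree, output it, flip a ``used'' flag on that pair inside the data structure, and feed the resulting state into the next walk. With this chained structure the total time is $\widetilde{O}\bigl(R + 2^k \cdot 2^{m/2}/\sqrt{R}\bigr)$, since each walk uses $\widetilde{O}(1/\sqrt{\epsilon})$ amplification steps of $\widetilde{O}(\sqrt{R})$ updates each; balancing the two terms yields $R = 2^{2k/3 + m/3}$ and total time $\widetilde{O}(2^{2k/3+m/3})$, matching the Liu--Zhandry lower bound. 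The query complexity follows from the same calculation, since only the update step queries $f$.

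The main obstacle I expect is justifying that the state produced after extraction can legitimately serve as the initial state of the next walk without paying another setup. After the check-and-measure step of \Cref{thm-mnrs-adapted} collapses onto $\ket{\psi_M}$, the extraction and tagging procedure produces a state close to the \emph{bad subspace} of the walk defined with respect to the updated marked set ``contains a collision whose used flag is not set''. \Cref{Proposition:GroverFromBad} is precisely what allows me to restart the search from $\ket{\psi_B}$ rather than from $\ket{\psi_U}$, at the same per-walk cost, so no fresh setup is needed. The constraint $k \leq m/4$ is exactly what makes this sound: it is equivalent to $R \leq 2^{m/2}$, i.e., $R^2/2^m = O(1)$, so each vertex contains only $O(1)$ collisions on average and tagging one pair perturbs $\ket{\psi_B}$ only by a $1/R$ fraction of its weight; the other constraint $k \leq 2n-m$ merely guarantees that enough collisions of $f$ exist in the first place.

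It will then remain to control how the error of each phase-estimation-based reflection and the $O(\alpha)$ angular error from \Cref{Proposition:GroverFromBad} compose over $2^k$ chained walks. By driving each individual error below $1/2^{k+\omega(1)}$ with a polylogarithmic overhead inside the phase estimation and a constant number of in-place amplification repetitions per walk, a standard union bound then yields overall success with high probability while multiplying the complexity only by factors absorbed into the $\widetilde{O}$.
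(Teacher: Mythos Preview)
Your high-level architecture---setting up once and chaining $2^k$ walks via \Cref{Proposition:GroverFromBad}---matches the paper, and the complexity balances correctly to $R=2^{2k/3+m/3}$. The gap is in the extraction step: flagging the measured pair does not land you near $\ket{\psi_B}$ of the next walk. Once you measure a specific collision $(x,y)$ out of $\ket{\psi_M}$, the post-measurement state is supported only on those $R$-subsets that \emph{contain both $x$ and $y$}. If the next walk is still on $J(2^n,R)$ with ``marked = contains an unflagged collision'', its bad state $\ket{\psi_B^{\text{new}}}$ is essentially uniform over \emph{all} $R$-subsets, so your state has overlap only $\Theta(R/2^n)$ with it---exponentially small. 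Your ``$1/R$ perturbation'' estimate is bounding $\|\ket{\psi_B^{\text{old}}}-\ket{\psi_B^{\text{new}}}\|$, which is indeed tiny but irrelevant; the quantity that matters is the distance between the \emph{post-measurement state} and $\ket{\psi_B^{\text{new}}}$, and these are nearly orthogonal. Running the same MNRS reflections on this state achieves nothing useful, since the very first update steps can evict $x$ or $y$ from the vertex.

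The paper's fix is to \emph{remove} the measured elements from the tree (via $\INSERT^\dagger$) rather than flag them, and to run the next walk on the \emph{shrunken} Johnson graph $J\bigl(\zo^n\setminus\mathsf{Preim}(C),\,R'\bigr)$ with $R'<R$. After removal (looping until the current vertex is unmarked), the state is \emph{exactly} the uniform superposition over unmarked $R'$-subsets of the restricted domain, i.e.\ literally $\ket{\psi_B}$ of the new walk, so \Cref{Proposition:GroverFromBad} applies with no approximation at all. The constraint $k\le m/4$ then enters not as a perturbation bound but as the condition $\varepsilon_{R,C}<1/2$ needed for the search to be well-defined (equivalently $R\le 2^{m/2}$), together with $R\gtrsim 2^k$ so that the vertex size never drops below $R/2$ over all $2^k$ rounds.
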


Our new algorithm, which is detailed in~\autoref{sec:new-algo} and~\autoref{sec:complexity}, solves the case $k \leq \frac{m}{4}$. The case $k \leq 2n-m$ was already solved by adapting the BHT algorithm, as detailed in~\autoref{subsection:general-case}.

\subsection{New Algorithm}\label{sec:new-algo}

We detail here our \emph{chained quantum walk} algorithm. We start by introducing some necessary notation.

Recall that the Johnson graph $J(N,R)$ is the regular, undirected graph whose vertices are subsets of size $R$ of $[N]$, and edges connect each pair of vertices which differ in exactly one element. We identify $[N]$ with $\zo^n$, the domain of $f$.

We assume that an efficient quantum unordered set data structure is given, which makes vertices in the Johnson graph correspond to quantum states, while allowing to implement efficiently the operations required for the MNRS quantum walks. It will be detailed in~\autoref{sec:data}. In the following we write $\ket{S}$ for the quantum state corresponding to a set $S$.

Assume that a table of (multi)-collisions of $f$ is given, which we denote $C$. This table contains entries of the form: $u : (x_1, \ldots, x_r)$ where $f(x_1) = \ldots = f(x_r) = u$ forms a multicollision of $f$, indexed by the image. We define the \emph{size} of $C$, its set of \emph{preimages} and its set of \emph{images}:
\begin{equation}
\begin{cases}
\mathsf{Preim}(C) := \bigcup_{ u : (x_1, \ldots, x_r) \in C } \{ x_1, \ldots, x_r \} \\
\mathsf{Im}(C) := \bigcup_{ u : (x_1, \ldots, x_r) \in C } \{ u \}
\end{cases}
\end{equation}

Given a table $C$, given a size parameter $R$, we define the two sets of sets:
\begin{align*}
V_R^C &:= \{ S \subseteq \left(\zo^n \backslash \mathsf{Preim}(C)\right), |S| = R \} \\
M_R^C &:= \{ S \subseteq \left(\zo^n \backslash \mathsf{Preim}(C)\right), |S| = R , \\
&\qquad \qquad	\left( \exists x \neq y \in S, f(x) = f(y) \vee \exists x \in S, f(x) \in \mathsf{Im}(C) \right) \}
\end{align*}

\paragraph{Idea of Our Algorithm.}
After running a quantum walk on a Johnson graph, we obtain a superposition of vertices which contain a collision. Our main idea is that, after removing this collision from the vertex and measuring it, it collapses to a superposition \emph{close} to a uniform superposition of vertices of smaller size. We can then restart a quantum walk on this smaller Johnson graph, which runs similarly as the previous one.

The definition of $V_R^C$ and $M_R^C$ allows to formalize this idea: the first one will be the set of vertices for the current walk, and the second one its set of \emph{marked} vertices. As we can see, the current walk excludes a set of previously measured inputs, and a vertex is marked if it leads to a new collision, or to a preimage of one of the previously measured images. The second case simply extends one of the currently known multicollision tuples. The probability for a vertex to be marked can be easily computed, and we just need to bound it as follows:
\[ \max \left(\frac{R |\mathsf{Im}(C)|}{2^m} , \frac{ R(R-1) }{ 2^m}  \right)  \leq  \varepsilon_{R, C} \leq \frac{R |\mathsf{Im}(C)|}{2^m} +  \frac{ R(R-1) }{ 2^m} \enspace, \]
since any vertex containing a collision, or a preimage from the table $C$, is marked.

In~\autoref{sec:data}, we will show that with an appropriate data structure, there exists an \emph{extraction} algorithm $\EXTRACT{}$ which does the following:
\begin{equation*}
\EXTRACT{} : C, R, \frac{1}{\sqrt{|M_R^C|}} \sum_{ S \in M_R^C} \ket{S} \mapsto C', R', \frac{1}{\sqrt{|  V_{R'}^{C'} \backslash M_{R'}^{C'}|}} \sum_{S \in V_{R'}^{C'} \backslash M_{R'}^{C'}} \ket{S} \enspace,
\end{equation*}
where $R' = R-r$ for some value $r$, and $C'$ contains exactly $r$ new elements (collisions adding new entries, or preimages going into previous entries). Thus, $\EXTRACT{}$ transforms the output of a successful walk into the set of \emph{unmarked vertices} for the next walk.

We can now give~\autoref{algo:full-algo}, depending on a tunable parameter $\ell$.

\begin{algorithm}[htb]
	\DontPrintSemicolon
	\KwIn{quantum access to $f~: \zo^n \to \zo^m$, parameter $k$}
	\KwOut{a table of multicollisions $C$ such that $|\mathsf{Im}(C)| \geq 2^k$}
	$C \leftarrow \emptyset$, $R \leftarrow 2^\ell$ \tcc{Initialize an empty table}
	$\ket{\psi} \leftarrow \sum_{S \in V_{2^\ell}^C} \ket{S}$ \;
	\While{$|\mathsf{Im}(C)| < 2^k$}{
		Run the quantum walk:
			\begin{itemize}
			\item Starting state: $\ket{\psi} = \sum_{S \in V_R^C \backslash M_R^C} \ket{S}$
			\item Graph: $J(\zo^n \backslash \mathsf{Preim}(C), R)$ (Johnson graph with vertices of size $R$, \\
				excluding the preimages of $C$)
			\item Marked vertices: $M_R^C$
			\item Iterates: $\floor{(\pi/2 )/(2\alpha)}$, where $\alpha = \arcsin \sqrt{\varepsilon_{R, C}}$
			\item Spectral gap: $\delta \simeq \frac{1}{R}$
			\end{itemize}
		Apply $\CHECK{}$ and measure the result: let \textsf{flag} be the output\;
		\uIf{\textsf{flag} is true}{
			\tcc{The state collapses on: $\sum_{S \in M_R^C} \ket{S}$}
			Apply $\EXTRACT{}$ (contains measurements)
			\begin{itemize}
			\item Update the table $C$
			\item Update the current width $R$
			\item Update the state: $\ket{\psi} = \sum_{S \in V_R^C \backslash M_R^C} \ket{S}$
			\end{itemize}
		}
		\tcc{Otherwise, the state collapses on: $\sum_{S \in V_R^C \backslash M_R^C} \ket{S}$ for the previous $R$ and $C$. There is nothing to extract from it, $C$ and $R$ remain unchanged.}
	}
	\Return{$C$}
	\caption{Chained quantum walk algorithm for multiple collisions.}\label{algo:full-algo}
\end{algorithm}


\subsection{Complexity Analysis}\label{sec:complexity}



\begin{theorem}[Time-memory tradeoff]\label{thm:walk-tradeoff}
For all $k \leq \ell \leq \min(2k/3+m/3,m/2)$, \autoref{algo:full-algo} computes $2^k$ collisions using $\bigOt{2^\ell}$ qubits and $\bigOt{2^{k+m/2-\ell/2}}$ Clifford+T+\textsf{qRAMR}+\textsf{qRAMW} gates.
\end{theorem}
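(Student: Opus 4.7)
The plan is to separately bound the one-time setup cost of the data structure and the per-walk cost of each MNRS step in the chain, then multiply the per-walk cost by the number of walks required to accumulate $2^k$ collisions.

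First I would pay the setup cost once at the start: building the uniform superposition over $V_{2^\ell}^\emptyset$ together with its associated quantum data structure (as will be detailed in \autoref{sec:data}) takes $\bigOt{2^\ell}$ gates and occupies $\bigOt{2^\ell}$ qubits, which immediately yields the claimed memory bound. Crucially, this setup is \emph{not} repeated across the chain: by \autoref{thm-mnrs-adapted}, a successful walk collapses the state onto $\frac{1}{\sqrt{|M_R^C|}}\sum_{S \in M_R^C} \ket{S}$, from which $\EXTRACT{}$ produces the initial state of the next walk, while a failed walk leaves the current initial state essentially unchanged (the $\ket{\psi_B}$ scenario of \autoref{Proposition:GroverFromBad}), so it can simply be rerun.

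Next I would estimate the cost of a single walk. Since $|\mathsf{Im}(C)| \le 2^k \le 2^\ell$ throughout, and $R$ decreases by at most $2^k$ from its initial value $2^\ell$, we have $R = \Theta(2^\ell)$ during the whole execution, and the in-vertex collision term dominates the lower bound on $\varepsilon_{R,C}$, giving $\varepsilon_{R,C} = \Theta(R^2/2^m) = \Theta(2^{2\ell - m})$. Under the hypothesis $\ell \le m/2$, $\varepsilon_{R,C}$ stays bounded and $\alpha = \arcsin\sqrt{\varepsilon_{R,C}}$ is bounded away from $\pi/2$, so (after at most a constant rescaling of $R$ in order to apply \autoref{Proposition:GroverFromBad}) a constant expected number of walk attempts suffices per successful extraction. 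A single walk performs $\bigO{1/\sqrt{\varepsilon_{R,C}}} = \bigO{2^{m/2-\ell}}$ MNRS iterations, each implemented by a phase estimation of $W'$ at precision $\Theta(\sqrt{\delta}) = \Theta(2^{-\ell/2})$ and thus costing $\bigOt{\sqrt{R}} = \bigOt{2^{\ell/2}}$ calls to $\SWUP'{}$; with update and check unitaries polynomial in $n$ thanks to \autoref{sec:data}, this yields $\bigOt{2^{m/2 - \ell/2}}$ gates per walk.

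Finally I would multiply by the $\bigOt{2^k}$ walks required: since the in-vertex collision branch of $M_R^C$ dominates whenever $|\mathsf{Im}(C)| \lesssim R$, a constant fraction of successful walks add a \emph{fresh} image to $C$ rather than merely extending an existing multicollision, so $\bigOt{2^k}$ walks suffice. The total gate count is then $\bigOt{2^\ell} + \bigOt{2^k \cdot 2^{m/2-\ell/2}} = \bigOt{2^{k+m/2-\ell/2}}$, where the walk term dominates the setup term exactly when $3\ell/2 \le k + m/2$, i.e., $\ell \le 2k/3 + m/3$, matching the stated hypothesis. The main obstacle will be controlling the error that accumulates over the $\bigOt{2^k}$ chained walks, since both the phase-estimation-based reflection over $\ket{\psi_U}$ and the $\EXTRACT{}$ routine are only approximate; the standard remedy is to boost their precision by polylog factors so that each step has inverse-superpolynomial trace-distance error from the idealized chained state, keeping the total deviation subconstant while adding only polylog overhead and thus preserving the asymptotic bound.
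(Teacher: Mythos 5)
Your proposal is correct and follows essentially the same route as the paper's proof: a one-time $\bigOt{2^\ell}$ setup, a per-walk cost of $\bigOt{2^{m/2-\ell/2}}$ obtained from $\varepsilon_{R,C} \geq R(R-1)/2^m$, the constraint $\ell \leq m/2$ to keep $\varepsilon_{R,C}$ bounded so that failed walks can simply be rerun from the collapsed unmarked state, and the constraint $\ell \leq 2k/3+m/3$ to make the $\bigOt{2^k}$ walk repetitions dominate the setup. The only detail you elide is that "$R$ decreases by at most $2^k$, hence $R = \Theta(2^\ell)$" needs the paper's explicit padding $2^\ell > p(n)2^{k+1}$ (with $p(n)$ the polynomial bound on multicollision width) to hold when $k$ is close to $\ell$; your closing remark on error accumulation across the chain is a point the paper handles more tersely by declaring the phase-estimation reflection exact after a polynomial precision boost.
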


\begin{proof}
It should be noted that~\autoref{algo:full-algo} outputs a set of multicollisions rather than collisions. But for a random function with a domain of $n$ bits, there is a polynomial (in $n$) limit to the width of multicollisions that can appear. Thus, we have a polynomial relation $p(n)$ between $|\mathsf{Preim}(C)|$ and $|\mathsf{Im}(C)|$. In particular, by taking $2^\ell$ greater than $p(n) 2^{k+1}$, we ensure that during the algorithm, $R > 2^{\ell-1}$. In particular, we never run out of elements.

Secondly, we can bound $\varepsilon_{R,C} \geq \frac{R(R-1)}{2^m}$. This allows to upper bound easily the time complexity of any of the walks: if the current vertex size is $R$ then it runs for $\bigO{ 2^{m/2} / R}$ iterates, and each iterate contains $\bigOt{\sqrt{R}}$ operations. The constants in the $\mathcal{O}$ are the same throughout the algorithm. This means that we can upper bound the complexity of each walk by $\bigOt{ 2^{m/2} / \sqrt{R} } \leq \bigOt{ 2^{m/2 - \ell / 2}}$.

By~\autoref{thm-mnrs-adapted}, the success probability of this walk is bigger than $1 - 4 \varepsilon_{R,C}$. If we do not succeed, the $\CHECK{}$ followed by a measurement make the current state collapse again on the superposition of unmarked vertices, and we run the exact same walk again. Note that for this algorithm to work, we must have $\varepsilon_{R,C} < 0.5$. This corresponds to the probability that an element in the list collides with another element (either in the list itself or in the set of forbidden preimages), which is a $\bigOt{2^{2\ell-m}}$. Hence, we must have $\ell \leq m/2$.

Then, as $\ell \leq 2k/3+m/3$, the final complexity of the algorithm is
\[ \bigOt{ 2^\ell + 2^k 2^{m/2 - \ell / 2}  } = \bigOt{2^{k+m/2-\ell/2}} \enspace. \qquad \ensuremath{\Box} \]
\end{proof}

\section{Quantum Radix Trees and Extractions}\label{sec:data}

In this section, we detail the \emph{quantum radix tree} data structure, a history-independent unordered set data structure introduced in~\cite{DBLP:phd/basesearch/Jeffery14}. We show that it allows to perform, exactly and in a polynomial number of Clifford+T+\textsf{qRAMR} + \textsf{qRAMW} gates, the two main operations required for our walk: $\SWUP'$ and $\EXTRACT{}$. We describe these operations in pseudocode, while ensuring that they are reversible and polynomial.



\subsection{Logical Level}

Following~\cite{DBLP:phd/basesearch/Jeffery14}, the \emph{quantum radix tree} is an implementation of a radix tree storing an unordered set $S$ of $n$-bit strings. It has one additional property: its concrete memory layout is history-independent. Indeed, there are many ways to encode a radix tree in memory, and as elements are inserted and removed, we cannot have a unique bit-string $T(S)$ representing a set $S$. We use instead a uniform superposition of all memory layouts of the tree, which makes the \emph{quantum state} $\ket{T(S)}$ unique, and independent of the order in which the elements were inserted or removed. Only the entry point (the root) has a fixed position.


We separate the encoding of $S$ into $\ket{T(S)}$ in two levels: first, a \emph{logical level}, in which $S$ is encoded as a unique radix tree $R(S)$; second, a \emph{physical level}, in which $R(S)$ is encoded into a quantum state $\ket{T(S)}$. The logical mapping $S \to R(S)$ is standard.

\begin{definition}[From~\cite{DBLP:phd/basesearch/Jeffery14}]
Let $S$ be a set of $n$-bit strings. The radix tree $R(S)$ is a binary tree in which each leaf is labeled with an element of $S$, and each edge with a substring, so that the concatenation of all substrings on the path from the root to the leaf yields the corresponding element. Furthermore, the labels of two children of any non-leaf node start with different bits.
\end{definition}

By convention, we put the ``0'' bit on the left, and ``1'' on the right. In addition to the $n$-bit strings stored by the tree, we append to each node the value of an $\ell$-bit \emph{invariant} which can be computed from its children, and depends only on the logical structure of the radix tree, not its physical structure. Typically the invariant can count the number of elements in the tree.

\begin{figure}[tb]
\centering
\begin{tikzpicture}[level distance=8mm, nodes={draw,rectangle,rounded corners=.2cm,->}, level 1/.style={sibling distance=40mm}, level 2/.style={sibling distance=15mm}, level 3/.style={sibling distance=15mm}]
\node[draw, circle] { }
    child { node[draw, circle] {} 
    		child{ node {$\mathtt{0000}$} edge from parent node[draw=none, left] {\texttt{00}} }
    		child{ node {$\mathtt{0010}$} edge from parent node[draw=none, left] {\texttt{10}} }
    		edge from parent node[draw=none, left] {\texttt{00}}
    }
    child { node[draw, circle] {}
    		child { node[draw, circle] {} 
    			child{ node {$\mathtt{1001}$} edge from parent node[draw=none, left] {\texttt{01}} }
    			child{ node {$\mathtt{1011}$} edge from parent node[draw=none, left] {\texttt{11}} }
    			edge from parent node[draw=none, left] {\texttt{0}}
    		}
    		child { node {$\mathtt{1111}$} edge from parent node[draw=none, left] {\texttt{111}}}
    		edge from parent node[draw=none, left] {\texttt{1}}
    	};
\end{tikzpicture}
\caption{Tree $R(S)$ representing the set $S = \{\mathtt{0000},\mathtt{0010}, \mathtt{1001}, \mathtt{1011}, \mathtt{1111}\}$ (the example is taken from~\cite{DBLP:phd/basesearch/Jeffery14}).}\label{fig:tree-logical}
\end{figure}
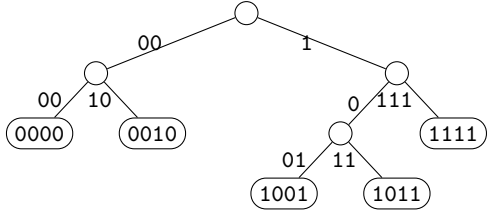

\subsection{Memory Representation}

We now detail the correspondence from $R(S)$ to $\ket{T(S)}$. We suppose that a quantum \emph{bit-string data structure} is given, that handles bit-strings of length between 0 and $n$ and performs operations such as concatenation, computing shared prefixes, testing if the bit-string has a given prefix, in time $\mathsf{poly}(n)$.

\paragraph{State of the Memory.}
We suppose that $\bigO{M n}$ qubits of memory are given, where $M \geq R$ will be set later on. We divide these qubits into $M$ \emph{cells} of $\bigO{n}$ qubits each, which we index from $0$ to $M-1$. We encode cell addresses on $m = \ceil{\log_2 M} + 1$ bits, and we also define an ``empty'' address $\bot$. Each cell will be either empty, or contain a node of the radix tree, encoded as a tuple $(i, a_l, a_r, \ell_l, \ell_r)$ where: 
\begin{itemize}
\item $i$ is the value of the invariant
\item $a_l$ and $a_r$ ($m$-bit strings) are respectively the memory addresses of the cells holding the left and right children, either valid indices or $\bot$. A node with $a_l = a_r = \bot$ is a leaf.
\item $\ell_l$ and $\ell_r$ are the labels of the left and right edges. ($\varepsilon$ if the node is a leaf, where $\varepsilon$ is the empty string).
\end{itemize}

In other words, we have added to the tree $R(S)$ a choice of memory locations for the nodes, which we name informally the \emph{memory layout} of the tree. The structure of $R(S)$ itself remains independent on its memory layout. 

The root of the tree is stored in cell number 0. In~\autoref{fig:tree-physical}, we give an example of a memory representation of the tree $R(S)$ of~\autoref{fig:tree-logical}. We take as invariant the number of leaves which, at the root, gives the number of elements in the set. 

\begin{figure}[tb]
\centering
\begin{tikzpicture}[level distance=10mm, nodes={draw,rectangle,rounded corners=.2cm,->}, level 1/.style={sibling distance=50mm}, level 2/.style={sibling distance=23mm}, level 3/.style={sibling distance=23mm}]

\node at (0,0) { $(5, 1, 2, \mathtt{00}, \mathtt{1})$ }
    child { node { $(2, 3, 4, \mathtt{00}, \mathtt{10})$ } 
    		child{ node { $(1, \bot, \bot, \varepsilon, \varepsilon)$ }  }
    		child{ node { $(1, \bot, \bot, \varepsilon, \varepsilon)$ }  }
    }
    child { node {$(3, 5, 7, \mathtt{0}, \mathtt{111})$}
    		child { node {$(2, 8, 9, \mathtt{01}, \mathtt{11})$} 
    			child{ node {$(1, \bot, \bot, \varepsilon, \varepsilon)$}  }
    			child{ node {$(1, \bot, \bot, \varepsilon, \varepsilon)$}  }
    		}
    		child { node {$(1, \bot, \bot, \varepsilon, \varepsilon)$} }
    	};
 \node[draw=none] at (-6,-1.5) {Structured view:};
\end{tikzpicture}

\medskip
\begin{tabular}{lc|c|c|c|c|c|c|c}
&0 & $(5, 1, 2, \mathtt{00}, \mathtt{1})$ & 1 & $(2, 3, 4, \mathtt{00}, \mathtt{10})$ & 2 & $(3, 5, 7, \mathtt{10}, \mathtt{11})$ & 3 & $(1, \bot, \bot, \varepsilon, \varepsilon)$ \\
Actual memory content: &4 & $(1, \bot, \bot, \varepsilon, \varepsilon)$ & 5 & $(2, 8, 9, \mathtt{01}, \mathtt{11})$ & 6& Empty cell & 7 & $(1, \bot, \bot, \varepsilon, \varepsilon)$ \\
&8 & $(1, \bot, \bot, \varepsilon, \varepsilon)$ & 9 & $(1, \bot, \bot, \varepsilon, \varepsilon)$ \\
\end{tabular}

\caption{Example of memory layout for the tree of~\autoref{fig:tree-logical}, holding the set $S = \{\mathtt{0000},\mathtt{0010}, \mathtt{1001}, \mathtt{1011}, \mathtt{1111}\}$.}\label{fig:tree-physical}
\end{figure}

A radix tree encoding a set of size $R$ contains $2 R-1$ nodes (including the root), which means that we need (a priori) no more than $M=2 R-1$ cells in our memory. In addition to the bit-strings $x$, we could add any data $d_x$ to which $x$ serves as a unique index. (This means adding another register which is non-empty for leaf nodes only). Finally, it is possible to account for multiplicity of elements in the tree by adding multiplicity counters, but since this is unnecessary for our applications, we will stick to the case of unique indices.

\paragraph{Definition.}
Let $S$ be a set of size $R$, encoded in a radix tree with $2R-1$ nodes. We can always take an arbitrary ordering of the nodes in the tree, for example the lexicographic ordering of the paths to the root (left = 0, right = 1). This means that, for any sequence of non-repeating cell addresses $I$, of length $2R-1$, we can define a mapping: $S, I \mapsto T_I(S)$ which specifies the writing of the tree in memory, by choosing the addresses $I = (i_1 = 0, \ldots, i_{2R-1})$ for the elements. For example, the tree of~\autoref{fig:tree-physical} would correspond to the sequence $(0,1,3,4,2,5,8,9,7)$. We can then define the \emph{quantum radix tree encoding $S$} as the quantum state:
\begin{equation}
\ket{ T(S) } = \sum_{\text{valid sequences } I} \ket{T_I(S)} \enspace,
\end{equation}
where we take a uniform superposition over all valid memory layouts.

For two different sets $S$ and $S'$, and for any pair $I,I'$ (even if $I' = I$), we have $T_{I'}(S) \neq T_I(S')$: the encodings always differ. This means that, as expected, we have $\braket{T(S)}{T(S')} = 0$.

\paragraph{Memory Allocator.}
In order to maintain this uniform superposition over all possible memory layouts, we need an implementation of a \emph{memory allocator}. This unitary $\ALLOC{}$ takes as input the current state of the memory, and returns a uniform superposition over the indices of all currently unoccupied cells. Possible implementations of $\ALLOC{}$ are detailed in~\autoref{subsection:memory-alloc}.

\subsection{Basic Operations}

We show how to operate on the quantum radix trees in $\mathsf{poly}(n)$ Clifford+T +\textsf{qRAMR}+\textsf{qRAMW} gates. We start with the basics: lookup, insertion and deletion.

\paragraph{Lookup.}
We define a unitary $\LOOKUP{}$ which, given $S$ and a new element $x$, returns whether $x$ belongs to $S$:
\begin{equation}
\LOOKUP~: \ket{x} \ket{T(S)} \ket{0} \mapsto \ket{x} \ket{T(S)} \ket{x \in S} \enspace.
\end{equation}
We implement $\LOOKUP{}$ by descending in the radix tree $R(S)$; he pseudocode is given in~\autoref{algo:lookup}. Since the ``while'' loop contains at most $n$ iterates, quantumly these $n$ iterates are performed controlled on a flag that says whether the computation already ended. After obtaining the result, they are recomputed to erase the intermediate registers.

\begin{algorithm}[tb]
	\DontPrintSemicolon
	\KwIn{element $x$, quantum radix tree $T(S)$}
	\KwOut{whether $x \in S$}
	$(i, a_l, a_r, \ell_l, \ell_r) \gets$ root  \;
	$y \gets \varepsilon$ (empty string) \;
	\While{ $a_l \neq \bot$ (node is not a leaf) }{
		\uIf{$y || \ell_l$ is a prefix of $x$}{
			$y \gets y || a_l$ \;
			$(i, a_l, a_r, \ell_l, \ell_r) \gets$ node at address $a_l$\;
		}
		\uElseIf{$y || \ell_r$ is a prefix of $x$}{
			$y \gets y || a_l$ \;
			$(i, a_l, a_r, \ell_l, \ell_r) \gets$ node at address $a_r$\;
		}
		\Else{
			Break (not a solution) \;
		}
	}
	\Return{true if $y = x$}
	\caption{$\LOOKUP{}$ as a classical algorithm.}\label{algo:lookup}
\end{algorithm}


\paragraph{Insertion.}
We define a unitary $\INSERT{}$, which, given a new element $x$, inserts $x$ in the set $S$. If $x$ already belongs to $S$, its behavior is unspecified.
\begin{equation}
\INSERT ~: \ket{x} \ket{T(S)} \mapsto \ket{x} \ket{T(S\cup \{x\})} \enspace.
\end{equation}

The implementation of $\INSERT{}$ is more complex, but the operation is still reversible. The pseudocode is given in~\autoref{algo:insert}. At first, we find the point of insertion in the tree, then we call $\ALLOC{}$ twice to obtain new memory addresses for two new nodes. We modify locally the layout to insert these new nodes, including a new leaf for the new element $x$. Then, we update the invariant on the path to the new leaf. Finally, we uncompute the path to the new leaf (all the addresses of the nodes on this path). To do so, we perform a loop similar to $\LOOKUP{}$, given the knowledge of the newly inserted element $x$. 

\begin{algorithm}[htbp]
	\DontPrintSemicolon
	\KwIn{element $x$, quantum radix tree $T(S)$}
	\KwOut{element $x$, quantum radix tree $T(S \cup \{x\})$}
	Find the first node $j_1: (i,a_l, a_r, \ell_l, \ell_r)$ such that $y$ is a prefix of $x$, $y || \ell_l$ is not a prefix of $x$ and $y || \ell_r$ is not a prefix of $x$ either.  Write all the addresses of the nodes on the path from the root to $j_1$ \;
	\tcc{If at this point we have found that the element belongs to $S$ instead, then the rest of the computation is meaningless.}
	\tcc{By construction $\ell_l$ starts with 0 and $\ell_r$ starts with 1. One of them shares a non-empty prefix $z$ with the remaining part of $x$. Without loss of generality, we assume that it is $\ell_l$.}
	Let $\ell_l = z || t$ and $x = y || z || x'$ \;
	Call $\ALLOC{}$ to obtain an address $j_2$\;
	Replace $a_l$ with $j_2$ in the node $j_1: (i,a_l, a_r, \ell_l, \ell_r)$ (move $a_l$ to a temporary register) \;
	Call $\ALLOC{}$ to obtain an address $j_3$\;
	Write at address $j_3$: $(*, \bot, \bot, \varepsilon, \varepsilon)$\;
	\tcc{Information at this point: $x, a_l, j_2, j_3$, the path to $j_1$ and the tree}
	\uIf{$t$ starts with 0}{
		Move $a_l$ and cut $\ell_l$ to modify the two nodes in positions $j_1$ and $j_2$ as follows: $j_1~: (i,j_2, a_r, z, \ell_r)$ and $j_2~: (*, a_l, j_3, t, x')$.\;
	}
	\Else{
		Move $a_l$ and cut $\ell_l$ to modify the two nodes in positions $j_1$ and $j_2$ as follows: $j_1~: (i,j_2, a_r, z, \ell_r)$ and $j_2~: (*, j_3,a_l, t, x')$.\;
	 }
	\tcc{We make this choice so that the left edge is always labeled starting with a 0 and the right edge with a 1}
	\tcc{Since we have moved $j_3$ and $a_l$, the remaining information is: $x$, the modified tree, $j_2$ and the path to $j_1$ (actually the path to $x$ in the new tree)}
	Recompute the invariants on the path to $x$, in reverse order (starting from the address $j_2$). \;
	\tcc{The recomputation of the invariants is reversible (but we still know the path to $x$)}
	Do a lookup of $x$ to uncompute the path to $x$. \;
	\tcc{Now the only information that remains is $x, T(S \cup \{x\})$.}
	\caption{$\INSERT{}$ as a classical algorithm.}\label{algo:insert}
\end{algorithm}

\paragraph{Deletion.}
The deletion can be implemented by uncomputing $\INSERT{}$, since it is a reversible operation. It performs:
\begin{equation}
\INSERT{}^\dagger ~:   \ket{x}\ket{T(S\cup \{x\})} \mapsto \ket{x}\ket{T(S)} \enspace.
\end{equation}
The deletion of an element that is not in $S$ is unspecified. 

\paragraph{Quantum Lookup.}
We can implement a ``quantum lookup'' unitary $\QLOOKUP{}$ which produces a uniform superposition of elements in $S$ having a specific property $P$. The only requirement is that the invariant of nodes has to store the number of nodes in the subtree having this property (and so, leaf nodes will indicate if the given $x$ satisfies $P(x)$ or not).
\begin{equation}
\QLOOKUP{}~: \ket{T(S)}\ket{0} \mapsto \ket{T(S)} \sum_{x \in S | P(x)} \ket{x} \enspace.
\end{equation}

This unitary is implemented by descending in the tree coherently (i.e., in superposition over the left and right paths) with a weight that depends on the number of solutions in the left and right subtrees. First, we initialize an address register $\ket{a}$ to the root. Then, for $n$ times (the maximal depth of the tree), we update the current address register as follows:
\begin{itemize}
\item We count the number of solutions in the left and right subtrees of the node at address $\ket{a}$ (say, $t_l$ and $t_r$).
\item We map $\ket{a}$ to $\ket{a} \left(\sqrt{ \frac{t_l}{t_l + t_r}} \ket{\text{left child of } a} +  \sqrt{ \frac{t_r}{t_l + t_r}} \ket{\text{right child of } a} \right)$. (We do nothing if $\ket{a}$ is a leaf).
\end{itemize}
In the end, we obtain a uniform superposition of the paths to all elements satisfying $P$. We can query these elements, then uncompute the paths using an inverse LOOKUP. Likewise, we can also perform a quantum lookup of pairs satisfying a given property, e.g., retrieve a uniform superposition of all collision pairs in $S$.

\subsection{Quantum memory allocators}\label{subsection:memory-alloc}

We now define the unitary $\ALLOC{}$, which given the current state of the memory, creates the uniform superposition of unallocated cells:
\begin{equation}
\ALLOC~: \ket{\text{current memory}} \ket{0} \mapsto \ket{\text{current memory}} \sum_{i \text{ unoccupied}} \ket{i} \enspace.
\end{equation}
We do not need to define a different unitary for un-allocation; we only have to recompute $\ALLOC{}$ to erase the addresses of cells that we are currently cleaning. To implement $\ALLOC{}$, we add to each memory cell a flag indicating if it is allocated. We propose two approaches.

\paragraph{Quantum search allocation.}
Classically, we can allocate new cells by simply choosing addresses at random and checking if they are already allocated or not. Quantumly, we can follow this approach using a \emph{quantum search} over all the cells for unallocated ones. Obviously, for this approach to be efficient, we need the proportion of unallocated cells to be always constant. Besides, if we keep a counter of the number of allocated cells (which does not vary during our quantum walk steps anyway), we can make this operation exact using Amplitude Amplification (Theorem 4 in~\cite{brassard2002quantum}). Indeed, this counter gives the proportion of allocated cells, so we know exactly the probability of success of the amplified algorithm.

We can implement this procedure with a single iteration of quantum search as long as we have a $33\%$ overhead on the maximal number of allocated cells (similarly to the case of searching with a single query studied in~\cite{chi1999quantum}). 

\paragraph{Quantum tree allocation.}
 A more standard, but less time-efficient approach to implement $\ALLOC{}$ is to organize the memory cells in a complete binary tree (a heap), so that each node of the tree stores the number of unallocated cells in its children. This tree is not a quantum radix tree, since its size never changes, and no elements are inserted or removed. In order to obtain the uniform superposition of free cell addresses, we mimic the approach of $\QLOOKUP{}$.


\subsection{Higher-level Operations for Collision Walks}

We now implement efficiently the higher-level operations required by our algorithms: performing a quantum walk update ($\SWUP'$), looking for collisions ($\CHECK{}$) and extracting them ($\EXTRACT{}$).

\paragraph{Representation.}
We consider the case of (multi-)collision search. Here the set $S$ is a subset of $[N] = \zo^n$, but we also need to store the images of these elements by the function $f~: \zo^n \to \zo^{m}$. Let $F = \{ f(x) || x, x \in S \}$. A collision of $f$ is a pair $(f(x) || x), (f(y) || y)$ such that $f(x) = f(y)$, i.e., the bit-strings have the same value on the first $m$ bits. 

Since our goal is to retrieve efficiently the collision pairs, we will store both a radix tree $T(S)$ to keep track of the elements, and $T(F)$ to keep track of the collisions. One should note that the sets $F$ and $S$ have the same size. When inserting or deleting elements, we insert and delete both in $T(S)$ and $T(F)$. These trees are stored in two separate chunks of memory cells.


\paragraph{SWUP'.}
We show an efficient implementation of the unitary $\SWUP'$:
\begin{equation}
\SWUP' \ket{T(S)} \ket{T(F)} \ket{c_{S \to S'}} = \ket{T(S')} \ket{T(F')} \ket{c_{S' \to S}}
\end{equation}
Where $c_{S \to S'}$ is the \emph{coin register} which contains information on the transition of a set $S$ to a set $S'$. As we have detailed before, the coin is encoded as a pair $(j,z)$ where $j \in [R]$ is the index of an element in $S$, which has to be removed, and $z \in [N-R]$ is the index of an element in $\zo^n \backslash S$, which has to be inserted. 
We implement $\SWUP'$ as follows:
\begin{enumerate}
\item First, we convert the coin register to a pair $x,y$ where: $\bullet$~$y$ is the $z$-th element of $\zo^n$ which is not in $S$ (see details in~\autoref{sec:missing_algo_radix_tree}) and $\bullet$~$x$ is the $j$-th element of $S$ (according to the lexicographic ordering of bit-strings). This can be done easily if the invariant of each node stores the number of leaves in its subtree. Note that both the mapping from $z$ to $y$, and from $j$ to $x$, are reversible. At this point the state is $\ket{T(S)} \ket{T(F)} \ket{x,y}$.
\item We use $\INSERT{}^\dagger$ to delete $x$ from $T(S)$, and delete $f(x)||x$ from $T(F)$.
\item We use $\INSERT{}$ to insert $y$ in $T(S)$ and $f(y)||y$ in $T(F)$. At this point the state is: $\ket{T(S')} \ket{T(F')} \ket{x, y}$ where $S' = (S \backslash \{x\}) \cup \{y\}$ and $F'$ is the set of corresponding images.
\item Finally, we convert the pair $x,y$ back to a coin register.
\end{enumerate}

\begin{remark}[Walking in a reduced set]
In our walk, we actually reduce the set of possible elements, due to the previously measured collisions. So the coin does not encode an element of $\zo^n \backslash S$, but of $\zo^n \backslash S \backslash \mathsf{Preim}(C)$, where $C$ is our current table of multicollisions. An adapted algorithm is also given in~\autoref{sec:missing_algo_radix_tree} for this case.
\end{remark}

\paragraph{Checking.}
Checking whether the tree contains a multicollision, or a preimage of some given set, can be made trivial by defining an appropriate invariant of the tree $T(F)$, which counts such solutions. The unitary $\CHECK{}$ simply checks whether this invariant is zero.

\paragraph{Extracting.}
The most important property for our chained quantum walk is the capacity to \emph{extract} multicollisions from the radix tree, in a way that preserves the rest of the state, and allows to reuse a superposition of \emph{marked} vertices for the current walk, as a superposition of \emph{unmarked} vertices for the next one. Recall from~\autoref{sec:new-algo} that we have defined a table of multicollisions $C$, a set $V_R^C$ of sets of size $R$ in $\zo^n \backslash \mathsf{Preim}(C)$, and a set $M_R^C \subseteq V_R^C$ of \emph{marked vertices}, which contain either a new element mapping to $\mathsf{Im}(C)$, or a new collision.

The operation $\EXTRACT{}$ does:
\begin{equation*}
\EXTRACT{} : C, R, \frac{1}{\sqrt{|M_R^C|}} \sum_{ S \in M_R^C} \ket{S} \mapsto C', R', \frac{1}{\sqrt{|  V_{R'}^{C'} \backslash M_{R'}^{C'}|}} \sum_{S \in V_{R'}^{C'} \backslash M_{R'}^{C'}} \ket{S} \enspace,
\end{equation*}
for a smaller $R'$ and a bigger $C'$. It is implemented as~\autoref{algo:full-extraction}.

\begin{algorithm}[htbp]
	\DontPrintSemicolon
	\KwIn{$C$, $R$, uniform superposition over $M_R^C$}
	\KwOut{$C'$ $R'$, uniform superposition over $V_{R'}^{C'} \backslash M_{R'}^{C'}$}
	\textsf{flag} $\leftarrow$ \textsf{true}\;
	$C' \leftarrow C$, $R' \leftarrow R$\;
	\While{ \textsf{flag}  is true }{
		Apply $\CHECK{}$ and measure the result: let \textsf{flag} be the output\;
		\tcc{If \textsf{flag} is true, the superposition has collapsed to a uniform superposition of subsets that contain at least one collision \emph{or} preimage of $C$}
		Perform a ``quantum lookup'' of the solutions (multicollision or preimage)\;
		Select one uniformly at random and copy it outside the tree, with its width $r$\;
		Apply $\INSERT^\dagger$ to remove the elements from the tree\;
		Measure $r$ and these elements\;
		$R' \leftarrow R - r$ \;
		\uIf{ $r > 1$}
		{
		Insert a new entry in $C'$\;		
		}
		\Else{
		Insert the element in $C'$, at the index of its image \;		
		}
%
	}
	\caption{Multicollision extraction: $\EXTRACT{}$.}\label{algo:full-extraction}
\end{algorithm}


The correctness of $\EXTRACT{}$ comes from the fact that, when we extract and measure an $r$-collision ($x_1, \ldots, x_r)$ with image $u$, we collapse on the uniform superposition over all sets of size $R-r$ which:
\begin{itemize}
\item do not contain any of $x_1, \ldots, x_r$;
\item do not contain $u$ (otherwise this would have gone into the multicollision).
\end{itemize}
We continue until there is no multicollision to measure anymore, where we are guaranteed that the current state is good to run the next walk.


%
%

\paragraph{Extraction without Measurement.}
\autoref{algo:full-extraction} contains measurements, but it is possible to perform the whole chained quantum walk without. The idea is to apply a sequence of a fixed number of walks, controlled by the current result of $\CHECK{}$. That is, if the current vertex does not contain a solution anymore, we start walking again, but if the vertex still contains a solution, we remove it instead. We are ensured that each of these operations produces a collision, though we do not know which ones did. We also keep track of the current vertex size to implement correctly the walk. At each step, it is reduced at least by 2, and at most by $\mathsf{poly}(n)$ (the maximal collision size). Since the walk step is done with phase estimation, we simply set the precision of the phase estimation circuit at the highest level required, i.e. depending on the initial vertex size, and it will work correctly for all walks.





\section{Searching for Many Collisions, in General}\label{subsection:general-case}

As we have seen, our new algorithm is valid (and tight) for all values of $n$, $m$ and $k \leq 2n-m$ such that $k \leq \frac{m}{4}$. Two approaches can be used for higher values of $k$.

\paragraph{BHT.}
A standard approach to find multiple collisions, which works when $m$ is small, is to extend the BHT algorithm~\cite{DBLP:conf/latin/BrassardHT98}. We select a parameter $\ell$, then make $2^\ell$ queries, and look for $2^k$ collisions on this list of queries. This is done by a quantum search on $\zo^n$ for an input colliding with the list.

There are on average $2^{2n-m}$ collision pairs in the function, so a random element of $\zo^n$ has a probability $\bigO{ 2^{n-m} }$ to be in a collision pair. This gives $\bigO{2^{\ell - m + n}}$ collision pairs for the initial list.

Thus, a search for a collision with the list has $\bigO{2^{\ell - m + n}}$ solutions in a search space of size $2^n$, and it requires $\sqrt{\frac{2^n}{2^{\ell + m-n}}} = 2^{(m-\ell)/2}$ iterates.

If this procedure is to output $2^k$ collisions, we need $\ell$ such that $2^{\ell - m + n} \geq 2^k$ i.e. $\ell -m + n \geq k$. By trying to equalize the complexity of the two steps, we obtain: $\ell = k + \frac{m-\ell}{2} \implies \ell = \frac{2k}{3} + \frac{m}{3}$ which is only valid for $k \leq 3n - 2m$. For a bigger $k$, we can repeat this. We find $2^{3n-2m}$ collisions in time (and memory) $2^{2n-m}$, and we do this $2^{k - (3n-2m)}$ times, for a total time $\bigOt{2^{k + m -n}}$. If we want to restrict the memory then we obtain the tradeoff of $\bigOt{2^{k+m/2-\ell/2}}$ time using $\bigO{2^\ell}$ memory.

\paragraph{Using our method.}
If $k > m/4$, then the memory limitation in \autoref{thm:walk-tradeoff} on $\ell$ becomes relevant. In that case, as we are restricted to $\ell \leq m/2$, the minimal achievable time is $\bigOt{2^{k+m/2-\ell/2}} = \bigOt{2^{k+m/4}}$.

\paragraph{Conclusion.}
The time and memory complexities of the problem are the following (in $\log_2$ and without polynomial factors):
\begin{itemize}
\item If $k \leq 3n-2m$: $\frac{2k}{3} + \frac{m}{3}$ time and memory using BHT
\item Otherwise, if $k \leq \frac{m}{4}$: $\frac{2k}{3} + \frac{m}{3}$ time and memory using our algorithm
\item Otherwise, if $m \leq \frac{4}{3} n$: $k + m -n$ time and $2n-m$ memory using BHT
\item Otherwise, if $m \geq \frac{4}{3} n$: $k + \frac{m}{4}$ time and $\frac{m}{2}$ memory using our algorithm
\end{itemize}
This situation is summarized in~\autoref{fig:parameter-ranges}, and it allows us to conclude:

\begin{theorem}\label{thm:all-ranges}
Let $f : \zo^n \to \zo^m$, $n \leq m \leq 2n$ be a random function. Let $k \leq 2n-m$. There exists an algorithm finding $2^k$ collisions in $\bigOt{ 2^{C(k, m,n)}}$ Clifford+T+\textsf{qRAMR}+\textsf{qRAMW} gates, and using $\bigOt{ 2^{C(k, m,n)}}$ quantum queries to $f$, where:
\begin{equation}
C(k,m,n) = \max\left( \frac{2k}{3} + \frac{m}{3} , k + \min \left( m-n , \frac{m}{4} \right) \right) \enspace.
\end{equation}
\end{theorem}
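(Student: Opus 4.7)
The plan is to prove \Cref{thm:all-ranges} by combining the four algorithmic complexities enumerated in the bullet points preceding the statement: two regimes of the BHT approach, and two regimes of the chained quantum walk of \Cref{thm:walk-tradeoff}. The closed form $C(k,m,n)$ has been set up so that, in each of these regimes, the claimed exponent coincides with $C(k,m,n)$. The whole argument therefore reduces to a case analysis on $(k,m,n)$ together with some algebra to identify which branch of the outer $\max$ and inner $\min$ is active.

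First I would record three algebraic equivalences. The inequality $\tfrac{2k}{3}+\tfrac{m}{3} \geq k + (m-n)$ is equivalent to $k \leq 3n-2m$; $\tfrac{2k}{3}+\tfrac{m}{3} \geq k + \tfrac{m}{4}$ is equivalent to $k \leq \tfrac{m}{4}$; and $m - n \leq \tfrac{m}{4}$ is equivalent to $m \leq \tfrac{4n}{3}$. These three lines let the active branch of $C(k,m,n)$ be read off from simple inequalities on $(k,m,n)$.

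Next I would split into two cases. If $k \leq \max(3n-2m,\tfrac{m}{4})$, the first argument of the outer $\max$ dominates and $C(k,m,n) = \tfrac{2k}{3}+\tfrac{m}{3}$; here either BHT (when $k \leq 3n - 2m$) or \Cref{thm:walk-tradeoff} with $\ell = \tfrac{2k}{3}+\tfrac{m}{3}$ (when $k \leq \tfrac{m}{4}$) reaches this complexity. Otherwise $k > \max(3n-2m, \tfrac{m}{4})$, the second argument dominates, and $C(k,m,n) = k + \min(m-n, \tfrac{m}{4})$; the branch of the inner $\min$ is dictated by the sign of $m - \tfrac{4n}{3}$. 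If $m \leq \tfrac{4n}{3}$, iterating BHT on the order of $2^{k-(3n-2m)}$ times yields $\bigOt{2^{k+m-n}}$, matching $C(k,m,n)$; if $m > \tfrac{4n}{3}$, \Cref{thm:walk-tradeoff} at its memory ceiling $\ell = \tfrac{m}{2}$ yields $\bigOt{2^{k+m/4}}$, also matching $C(k,m,n)$.

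No new algorithmic ingredient is required: all four building blocks appear explicitly in the discussion immediately preceding the statement. The main pitfall is bookkeeping — confirming that the two cases exhaust the admissible set $\{(k,m,n) : n \leq m \leq 2n,\ k \leq 2n-m\}$, that in each subcase the walk parameters genuinely satisfy the constraint $\ell \leq \min(\tfrac{2k}{3}+\tfrac{m}{3},\tfrac{m}{2})$ required by \Cref{thm:walk-tradeoff}, and that the query count inherits the same asymptotic bound as the gate count in every regime.
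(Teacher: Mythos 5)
Your proposal is correct and follows essentially the same route as the paper: the paper's own proof consists precisely of the two algebraic equivalences $k \leq 3n-2m \iff \frac{2k}{3}+\frac{m}{3} \geq k+m-n$ and $k \leq \frac{m}{4} \iff \frac{2k}{3}+\frac{m}{3} \geq k+\frac{m}{4}$, combined with the four-regime discussion (BHT, chained walk, and their extended variants) given immediately before the theorem. Your version merely makes the case exhaustiveness and the constraint checks on $\ell$ more explicit, which the paper leaves implicit.
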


\begin{proof}
We check that: $k \leq 3n-2m \iff \frac{2k}{3} + \frac{m}{3} \geq k + m-n$ and $k \leq \frac{m}{4} \iff \frac{2k}{3} + \frac{m}{3} \geq k + \frac{m}{4}$. \qed
\end{proof}

We conjecture that the best achievable complexity is, in fact, $C(k,m,n) = \frac{2k}{3} + \frac{m}{3}$ for any admissible values of $k$, $m$ and $n$. It would however require a non-trivial extension of our algorithm, capable of outputting collisions at a higher rate than what we currently achieve.

In terms of time-memory trade-offs, we can summarize the results as:

\begin{theorem}[General Time-memory tradeoff]\label{thm:tradeoff}
For all $k \leq \ell \leq \min(2k/3+m/3,\max(2n-m, m/2))$, there exists an algorithm that computes $2^k$ collisions using $\bigOt{2^\ell}$ qubits and $\bigOt{2^{k+m/2-\ell/2}}$ Clifford+T+\textsf{qRAMR}+\textsf{qRAMW} gates and quantum queries to $f$.
\end{theorem}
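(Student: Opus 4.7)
The plan is to prove \Cref{thm:tradeoff} by combining the two complementary subroutines already available: the chained quantum walk of \Cref{thm:walk-tradeoff}, and the quantum BHT variant from \Cref{subsection:general-case}. Both yield the same time formula $\bigOt{2^{k+m/2-\ell/2}}$ using $\bigOt{2^\ell}$ qubits; the only thing that differs between them is the range of $\ell$ over which they apply. I will split on $\ell$ relative to $m/2$.

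If $\ell \le m/2$, then the hypothesis $k \le \ell \le \min(2k/3+m/3, \max(2n-m,m/2))$ specializes to $k \le \ell \le \min(2k/3+m/3, m/2)$, which is exactly the precondition of \Cref{thm:walk-tradeoff}; so the chained walk gives the claimed tradeoff directly. If $\ell > m/2$, then $\max(2n-m, m/2) > m/2$ forces $2n-m > m/2$, i.e.\ $m < 4n/3$, and the hypothesis specializes to $k \le \ell \le \min(2k/3+m/3, 2n-m)$. In this second case I would invoke the BHT algorithm sketched in \Cref{subsection:general-case}: query $f$ on $2^\ell$ random points and store them in a QRACM-backed searchable table (cost $\bigOt{2^\ell}$), then run Grover's algorithm over $\zo^n$ to find inputs colliding with a table entry. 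There are on average $2^{\ell+n-m}$ such inputs inside $\zo^n$, so each Grover search costs $\bigOt{2^{(m-\ell)/2}}$, and $2^k$ invocations give total time $\bigOt{2^{k+m/2-\ell/2}}$; the setup cost $2^\ell$ is dominated since $\ell \le 2k/3+m/3$ implies $\ell \le k+m/2-\ell/2$.

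The remaining verifications are largely bookkeeping. One needs to check that $2^k$ \emph{distinct} collisions can be produced: when $k \le \ell+n-m$, a single list suffices (the pool of colliding pairs already has size $2^{\ell+n-m} \ge 2^k$, and standard rejection tricks during Grover add at most polylog overhead); when $k > \ell+n-m$, one resamples the list $\bigOt{2^{k-\ell-n+m}}$ times, and the same calculation shows that the total setup cost $\bigOt{2^{k+m-n}}$ is still dominated by the search cost under the assumption $\ell \le 2n-m$. Combining the two cases yields the stated complexity for all $\ell$ in the prescribed range.

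The main obstacle, to the extent there is one, is confirming that the BHT variant produces distinct collisions robustly across multiple list resamples and that the quantum data structures (as built in \Cref{sec:data}) support the required membership test coherently—but both of these are standard and already implicit in the discussion of \Cref{subsection:general-case}. No new algorithmic ingredient is required beyond observing that the walk-based and BHT-based tradeoffs together cover the full interval $[k,\max(2n-m,m/2)] \cap [k, 2k/3+m/3]$.
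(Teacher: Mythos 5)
Your proposal is correct and follows essentially the same route as the paper: \Cref{thm:tradeoff} is obtained there precisely by taking the union of the chained-walk tradeoff of \Cref{thm:walk-tradeoff} (valid for $\ell \leq m/2$) and the memory-restricted BHT tradeoff described in \Cref{subsection:general-case} (valid for $\ell \leq 2n-m$, with list resampling when $k > \ell + n - m$). Your explicit case split on $\ell$ versus $m/2$ and the verification that the setup costs $2^{\ell}$ and $2^{k+m-n}$ are dominated under $\ell \leq 2k/3+m/3$ and $\ell \leq 2n-m$ respectively match the paper's (largely implicit) argument.
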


Similarly, as in \cite{DBLP:conf/tqc/HamoudiM21}, we conjecture that the trade-off should be achievable for all $\ell \leq 2k/3+m/3$.

\newdict{\xlabel}{1}
\setkey{\xlabel}{1}{4/3}
\setkey{\xlabel}{2}{3/2}
\setkey{\xlabel}{3}{2}

\newdict{\ylabel}{0}
\setkey{\ylabel}{1}{1/4}
\setkey{\ylabel}{2}{1/3}
\setkey{\ylabel}{3}{2/5}
\setkey{\ylabel}{4}{2/3}
\setkey{\ylabel}{5}{1}

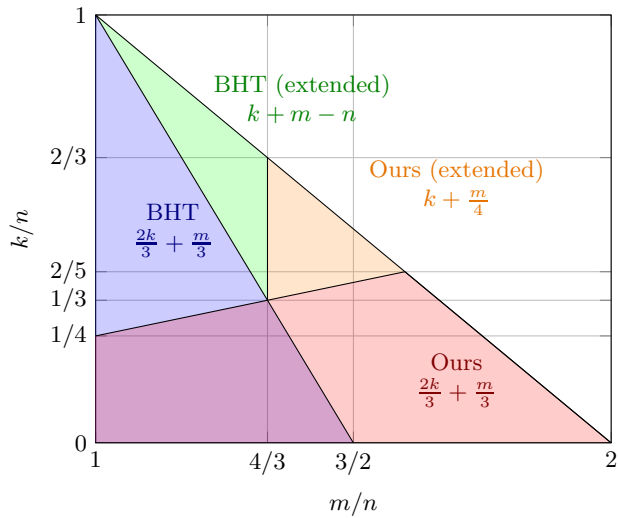
\begin{figure}[htbp]
\centering
\begin{tikzpicture}
\begin{axis}[
scale=1., legend pos=outer north east,
xlabel={$m/n$}, ylabel={$k/n$},
xtick={1,1.333333,1.5,2},
ytick={0,0.25,0.33333,0.4,0.66666,1},
xticklabel={$\xlabel{\ticknum}$},
yticklabel={$\ylabel{\ticknum}$},
ymin=0,ymax=1.0,xmin=1,xmax=2.0, xmajorgrids,ymajorgrids,
title={},
legend style={cells={anchor=west},name=legend,at={(1.1,0.5)},anchor=west}
]
\addplot[name path=limit] coordinates { (1,1) (2,0)};
\addplot[name path=bht] coordinates { (1,1) (1.5,0) };
\addplot[name path=nous] coordinates { (1,0.25) (8/5,8/20) (2,0) };
\addplot[name=vertical] coordinates { (1.333333, 0.333333) (1.333333, 0.666666) };
\addplot[name path=tmp] coordinates {(0,0) (2,0)};
\path[name path=axis] (0,0) -- (2,0);
\addplot [ thick, color=blue, fill=blue,  fill opacity=0.2]
    fill between[ of= bht and tmp];
\addplot [ thick, color=red, fill=red,  fill opacity=0.2 ]
    fill between[ of= nous and tmp];
\addplot [ thick, color=green, fill=green,  fill opacity=0.2 ]
    fill between[ of= limit and bht, soft clip={domain=1:4/3}];
\addplot [ thick, color=orange, fill=orange,  fill opacity=0.2 ]
    fill between[ of= limit and nous, soft clip={domain=4/3:2}];
\node[color=darkblue] at (axis cs:1.15,0.5) {\begin{tabular}{c}BHT \\ $\frac{2k}{3} + \frac{m}{3}$ \end{tabular}};
\node[color=darkred] at (axis cs:1.7,0.15) {\begin{tabular}{c}Ours \\ $\frac{2k}{3} + \frac{m}{3}$ \end{tabular}};
\node[color=darkorange] at (axis cs:1.7,0.6) {\begin{tabular}{c}Ours (extended) \\ $k + \frac{m}{4}$ \end{tabular}};
\node[color=darkgreen] at (axis cs:1.4,0.8) {\begin{tabular}{c}BHT (extended) \\ $k + m - n$ \end{tabular}};
\end{axis}
\end{tikzpicture}
\caption{Exponent in the algorithm depending on the relative values of $k, m$ and $n$.}
\label{fig:parameter-ranges}
\end{figure}


\newcommand{\norm}[1]{||#1||_2}
\newcommand{\sv}{\mathbf{s}}
\newcommand{\cv}{\mathbf{c}}
\section{Applications}\label{section:applications}

In this section, we show how our algorithm can be used as a building block for lattice sieving and to solve the limited birthday problem. We also discuss the problem of multicollision search.

\subsection{Improvements in quantum sieving for solving the Shortest Vector Problem}
\label{sec:quantum_sieving}
\subsubsection{Context}
A lattice $\cL = \cL(\vec{b}_1, \ldots, \vec{b}_d) := \{\sum_{i=1}^d z_i \vec{b}_i : z_i \in \mZ\}$ is the set of all integer combinations of linearly independent vectors $\vec{b}_1,\dots,\vec{b}_d \in \mR^d$.
We call $d$ the \emph{rank} of the lattice and $(\vec{b}_1, \ldots, \vec{b}_d)$ a \emph{basis} of the lattice.

The most important computational problem on lattices is the Shortest Vector Problem (SVP). 
Given a basis for a lattice $\cL \subseteq \mR^d$,
SVP asks to compute a non-zero vector in $\cL$ with the smallest Euclidean norm. 

The main lattice reduction algorithm used for lattice-based cryptanalysis is the famous BKZ algorithm \cite{DBLP:journals/tcs/Schnorr87}. It internally uses an algorithm for solving (near) exact SVP in lower-dimensional lattices. Therefore, finding faster algorithms to solve exact SVP is critical to choosing security parameters of cryptographic primitives. 

Previously, the fastest quantum algorithm solved SVP under heuristic assumptions in $2^{0.2570d + o(d)}$ time~\cite{CL21}. It applies the MNRS quantum walk technique to the state-of-the-art classical algorithm called lattice sieving, where we combine close vectors together to obtain shorter vectors at each step.

It was noted in~\cite{CL21} that the algorithm could be slightly improved if we could find many marked vertices in a quantum walk without repaying the setup each time, which is exactly what we showed in~\autoref{sec:many_collisions}. Without going through the whole framework of the~\cite{CL21} algorithm, we present its main parameters and ideas, and how our quantum walks improves it.

The sieving algorithm works as follows: we start from $N \approx 2^{0.2075d + o(d)}$ points $x_1,\dots,x_n$ on the $d$-dimensional sphere of some norm $R$ and we want to find $N$ pairs $(x_i,x_j)$ such that the norm of $x_i - x_j$ is slightly smaller than $R$. This is one sieving step and the full algorithm performs the above sieving step $poly(d)$ times, so we concentrate on the running time of a single sieving step.

\subsubsection{Parameters of the algorithm.} We fix a dimension $d$. The algorithm uses a free parameter $c \in (0,1)$. 
 For an angle $\alpha \in [0,\pi/2]$, $V_d(\alpha)$ is the ratio of the volume of a spherical cap of angle $\alpha$ to the volume of the $d$-dimensional sphere. This means 
 $$ V_d(\alpha) = \poly(d)\sin^d(\alpha).$$
 We also define:
\begin{itemize}
	\item $N = \frac{1}{V_d(\pi/3)} \approx 2^{0.2075d + o(d)}$.
	\item $\alpha$ st. $V_d(\alpha) = N^{-(1-c)}$.
	\item $\theta^*_\alpha = 2\arcsin(\frac{1}{2\sin(\alpha)})$.
	\item $\zeta$ st. $N^\zeta = N^{2c}V_{d-1}(\theta^*_\alpha)$.
\end{itemize}
The quantum algorithm of~\cite{CL21} in dimension $d$ with free parameter $c$ runs in time
\begin{align}\label{Eq:SievingComplexity}
	T = NBREP \cdot(INIT + N^{1-c}FAS_1).
\end{align}
where 
\begin{itemize}
	\item $NBREP = \max\{1,N^{c - \zeta + o(1)}\}$.
	\item $INIT = N^{1 + o(1)}$.
	\item $FAS_1$ is the running time of finding many marked elements in a Johnson graph using a quantum walk, which we will describe more in detail below.
\end{itemize}
The idea of the FindAllSolutions subroutine ($FAS_1$) is the following: we start from $N^c$ points $x_1,\dots,x_{N^c}$ of norm $R$ which already are at an angle $\alpha$ of a certain point $\sv$, and we want to find most (at least a constant fraction) of the pairs $(x_i,x_j)$ st. $\norm{x_i - x_j} < R$. There are on average $N^\zeta$ and the goal of this procedure is to find a constant fraction of them. 

\subsection{Analysis of $FAS_1$}

The analysis of this random walk involves a new free parameter $c_1 < c$ over which we can optimize. Following~\cite{CL21}, we also define 
\begin{itemize}
	\item $\beta$ st. $V_d(\beta) = \frac{1}{N^{c_1}}$.
	\item $\rho_0$ st. $N^{\rho_0} = \frac{V_d(\beta)}{W_d(\beta,\theta^*_\alpha)}, \ $ where $W_d(\beta,\theta^*_\alpha) = \mathrm{poly}(d)\cdot \left(1 - \frac{2 \cos^2(\beta)}{1 + \cos(\theta^*_\alpha)}\right)^{d/2}$.
\end{itemize}

In order to find these solutions, the authors of~\cite{CL21} construct a code $C$ on the sphere and check only the pair $(x_i,x_j)$ if $x_i,x_j$ are at angle at most $\beta$ of the same code point $\cv$. This means they have a function $f$ that maps a point $x_i$ to the nearest codeword, which in this setting is efficiently computable. Then the idea is to look for solution pairs $(x_i,x_j)$ st. $f(x_i) = f_(x_j)$. By doing so, they miss on some collision pairs, but there will be solutions that satisfy this property and will be easy to find. Then once they run out of solutions of this form, they choose another code $C$ and start again. Here, the code will be of size $2^{c_1}$.

To perform the above, they use a quantum walk for collision finding, except that pairs $(x_i,x_j)$ st. $f(x_i) = f(x_j)$ do not necessarily satisfy $\norm{x_i - x_j} < R$ (but this condition can also be checked efficiently). They construct the same Johnson graph as for collision finding. Each node contains $N^{c_1}$ points for a parameter $c_1 < c$ and $2$ nodes are adjacent if they differ by exactly one point. The only difference is that a node is marked not only if it contains $x_i,x_j$ such that $f(x_i) = f(x_j)$, but also that $\norm{x_i - x_j} < R$. For each node, the probability that a node contains a solution pair $(x_i,x_j)$ is $N^{2c_1}V_{d-1}(\theta^*_\alpha)$ and the probability that it also satisfies $f(x_i) = f(x_j)$ is $N^{-\rho}$, so $\eps = N^{2c_1 - \rho_0} V_{d-1}(\theta^*_\alpha)$. On the other hand, looking only at pairs such that $f(x_i) = f(x_j)$ allows to perform the quantum walk with efficient update, as for the regular collision finding. 

This quantum walk has parameters:\footnote{In~\cite{CL21}, there are extra parameters $c_2,\rho \approx 0$, we perform the same choice here (we checked that with our improvement, this remains the optimal choice).} 
\begin{itemize}
	\item $S = N^{c_1}$.
	\item $\delta = N^{-c_1}$.
	\item $\eps = N^{2c_1 - \rho_0} V_{d-1}(\theta^*_\alpha)$.
	\item $U = 1$.
	\item $C = 1$.
\end{itemize}
For each choice of $C$, we need to find $N^{\zeta - \rho_0}$ random marked vertices, and then repeat this $N^{\rho_0}$ times to find $N^{\zeta}$ solutions. The formula used in~\cite{CL21} is 
$$ FAS_1 = N^{\rho_0} \cdot N^{\zeta - \rho_0} \left(S + \frac{1}{\sqrt{\eps}}\left(\frac{1}{\sqrt{\delta}}U + C\right)\right).$$
However, with our results, we don't have to redo the setup in the quantum walk and we obtain 
$$ FAS_1 = N^{\rho_0} \cdot  \left(S + \frac{N^{\zeta - \rho_0} }{\sqrt{\eps}}\left(\frac{1}{\sqrt{\delta}}U + C\right)\right).$$
With this improvement, we redid the optimization of~\cite{CL21} and obtained the following new running for quantum sieving. 

We take the following set of parameters: $c \approx 0.3875, c_1 \approx 0.27$ which gives $\zeta \approx 0.1568$ and $\rho_0 \approx 0.1214$. Notice that with these parameters, we are indeed in the range of Theorem~\ref{thm:main-result} since the number of solutions we extract is $2^k = N^{\zeta - \rho_0} \approx N^{0.0354}$ and the range of the function $f$ on which we collision is $2^m = 2^{c_1} \approx N^{0.27}$ (the number of points in the code), so we indeed have $k \le \frac{m}{4}$. The parameters of the quantum walk become:
$$ S \approx N^{0.27}, \ \eps \approx N^{-0.2}, \ \delta \approx N^{-0.27}, \ U = C = 1 \enspace. $$  
This gives $FAS_1 \approx N^{0.27}$. Plugging this into Equation~\ref{Eq:SievingComplexity}, we get a total running time of $N^{1.2347}$ which is equal to  $2^{0.2563d + o(d)}$ (recall that $N = \frac{1}{V_d(\pi/3)} \approx 2^{0.2075d + o(d)}$) improving slightly the previous running time of $2^{0.2570d + o(d)}$.

\subsection{Solving the Limited Birthday Problem}

The following problem is very common in symmetric cryptanalysis. It appears for example in impossible differential attacks~\cite{DBLP:conf/asiacrypt/BouraNS14}, but also in rebound distinguishers~\cite{DBLP:journals/tosc/HosoyamadaNS20}. In the former case we use generic algorithms to solve the problem for a black-box $E$, and in the latter, a valid distinguisher for $E$ is defined as an algorithm outputting the pairs faster than the generic one.

\newcommand{\Din}{\mathcal{D}_{\mathrm{in}}}
\newcommand{\Dout}{\mathcal{D}_{\mathrm{out}}}
\newcommand{\Deltain}{\Delta_{\mathrm{in}}}
\newcommand{\Deltaout}{\Delta_{\mathrm{out}}}

\begin{problem}[Limited Birthday]\label{pb:lb}
Given access to a black-box permutation $E~: \zo^n \to \zo^n$ and possibly its inverse $E^{-1}$, given two vector spaces $\Din$ and $\Dout$ of sizes $2^{\Deltain}$ and $2^{\Deltaout}$ respectively, find $2^k$ pairs $x,x'$ such that $x \neq x', x \oplus x' \in \Din, E(x) \oplus E(x') \in \Dout$.
\end{problem}

For simplicity, we will focus only on the time complexity of the problem, although some parameter choices require a large memory as well. Classically the best known time complexity is given in~\cite{DBLP:conf/asiacrypt/BouraNS14}:
\begin{equation}\label{eq:classical-lb}
\max \left( \min_{\Delta \in \{ \Deltain, \Deltaout \}} \left( \sqrt{2^{k + n+1 - \Delta}} \right), 2^{k + n+1 - \Deltain - \Deltaout}  \right)
\end{equation}

This complexity is known to be tight for $N = 1$~\cite{DBLP:journals/tosc/HosoyamadaNS20}.

In the quantum setting, we need to consider superposition access to $E$ and possibly $E^{-1}$ to have a speedup on this problem\footnote{When $E$ is a black box with a secret key, this is the \emph{Q2} model, see e.g.\cite{DBLP:journals/tosc/KaplanLLN16}. In some cases, e.g. rebound distinguishers, $E$ does not contain any secret.}. Previously the methods used~\cite{DBLP:journals/tosc/KaplanLLN16} involved only individual calls to Ambainis' algorithm (when there are few solutions) or an adaptation of the BHT algorithm (when there are many solutions).

The quantum algorithm, as the classical one, relies on the definition of \emph{structures} of size $2^{\Deltain}$, which are subsets of the inputs of the form $T_x = \{ x \oplus v, v \in \Din\}$ for a fixed $x$. For a given structure $T_x$, we can define a function $h_x~: \zo^{\Deltain} \to \zo^{n - \Deltaout}$ such that any collision of $h_x$ yields a pair solution to the limited birthday problem. The algorithm then depends on the number of required pairs compared to the (expected) number of collisions of $h_x$.
\begin{itemize}
\item If $2^k < \frac{2^{2\Deltain}}{2^{n-\Deltaout}} \iff k < 2\Deltain - n + \Deltaout$, then we need only one structure. To recover all the pairs, we need a time exponent (by~\Cref{thm:all-ranges}):
\[ \max \left( \frac{2k}{3} + \frac{n - \Deltaout}{3}, k + \min\left(  n-\Deltaout - \Deltain, \frac{n-\Deltaout}{4} \right)  \right) \]
\item If $\frac{2^{2\Deltain}}{2^{n-\Deltaout}} < 1$, then we follow the approach of~\cite{DBLP:journals/tosc/KaplanLLN16}, which is to repeat $2^k$ times a Grover search among structures, to find one that contains a pair (this is done with Ambainis' algorithm). The time  exponent is $k + \frac{n-{\Deltaout}}{2} - \frac{\Deltain}{3}$.
\item If $1 < \frac{2^{2\Deltain}}{2^{n-\Deltaout}} < 2^k $, we need to consider several structures and to extract all of their collision pairs. Using~\Cref{thm:all-ranges} this gives a time exponent:
\[ \max \left( k + \frac{2}{3}(n-\Deltain -\Deltaout) ,k +  \min\left(  n-\Deltaout - \Deltain, \frac{n-\Deltaout}{4} \right) \right)  \]
\end{itemize}

Finally, we can swap the roles of $\Deltain$ and $\Deltaout$ and take the minimum. Unfortunately this does not lead to an equation as simple as~\autoref{eq:classical-lb}.


%


\subsection{On multicollision-finding}
A natural extension of this work would be to look for multicollisions.

\begin{problem}[$r$-collision search]
Let $f~: \zo^n \to \zo^m$ be a random function. Find an $r$-collision of $f$, that is, a tuple $(x_1, \ldots, x_r)$ of distinct elements such that $f(x_1) = \ldots = f(x_r)$.
\end{problem}

As with collisions, the lower bound by Liu and Zhandry~\cite{DBLP:conf/eurocrypt/LiuZ19} is known to be tight when $m \leq n$. The corresponding algorithm is an extension of the BHT algorithm which constructs increasingly smaller lists of $i$-collisions, starting with $1$-collisions (evaluations of the function $f$ on arbitrary points) and ending with a list of $r$-collisions. 

This algorithm, given in~\cite{DBLP:conf/pqcrypto/HosoyamadaSTX19,DBLP:journals/tcs/HosoyamadaSTX20}, finds $2^k$ $r$-collisions in time and memory:

\[ \bigOt{2^{k\frac{2^{(r-1)}}{2^r-1}} 2^{m\frac{2^{(r-1)}-1}{2^r-1}}}  \enspace. \]

As with 2-collisions, it is possible to extend it when $m > n$. Of course, there's a constraint: the list $i$ must contain more tuples that are part of an $i+1$-collision than the size of the list $i+1$.

The size of each $i$-collision list is $N_i = 2^{k\frac{2^r-2^{r-i}}{2^r-1}} 2^{m \frac{2^{r-i}-1}{2^r-1}}$. The probability that an $i$-collision extends to an $i+1$-collision is of order $2^{n-m}$. Hence, for the algorithm to work, we must have, for all $i$,
$N_{i+1}/N_i \leq 2^{n-m}$. This means:
\[
k \frac{2^{r-i-1}}{2^r-1} - m\frac{2^{r-i-1}}{2^r-1} \leq n-m \enspace.
\]
This constraint is the most restrictive for the largest possible $i$, $r-1$. We obtain the following constraint, which subsumes the others:
\[
 k \frac{1}{2^r-1}+m\left(1-\frac{1}{2^r-1} \right) \leq n \enspace.
\]

This gives the point up to which this algorithm meets the lower bound.
We could use our new algorithm as a subroutine in this one, to find 2-collisions, and this would allow to relax the constraint over $N_2/N_1$. Unfortunately, this cannot help to find multicollisions, as the other constraints are more restrictive. More generally, these constraints show that it is not possible to increase the range of the BHT-like $r$-collision algorithm solely by using an $r-i$-collision algorithm with an increased range.

%

\ifeprint
\subsubsection*{Acknowledgments.}
A.S. wants to thank Nicolas David and María Naya-Plasencia for discussions on the limited birthday problem. A.S. is supported by ERC-ADG-ALGSTRONGCRYPTO (project 740972). Y.S. is supported by EPSRC grant EP/S02087X/1 and EP/W02778X/1.

\fi

\bibliography{biblio}
\bibliographystyle{splncs03}

\newpage
\appendix
\ifeprint
\section*{Appendix}
\else
\section*{Supplementary Material}
\fi

\section{$z$-th element outside the radix tree}
\label{sec:missing_algo_radix_tree}

\SetKwFunction{FindNthNotInSubtree}{FindNthNotInSubtree}
\SetKwFunction{Left}{left}
\SetKwFunction{Right}{right}
\SetKwFunction{Leaves}{leaves}
\SetKwFunction{Subtree}{subtree}
\SetKwFunction{FindNthNotInTree}{FindNthNotInTree}
\SetKwFunction{Root}{root}
\SetKwFunction{CountInIntervalNotTree}{CountInIntervalNotTree}
\SetKwFunction{CountInIntervalNotSubtree}{CountInIntervalNotSubtree}
\SetKwFunction{FindNthNotInTwoSubtrees}{FindNthNotInTwoSubtrees}
\SetKwFunction{FindNthNotInTwoTrees}{FindNthNotInTwoTrees}


In this section, we solve the following problem:
\begin{quote}
	Find the value $y$ of the $z$-th element of $\zo^n$ which is not in $S$.
\end{quote}
We need the following invariant in the tree: each node $n$ of $S$ stores
the number of leaves in the subtree rooted at $n$. We denote this quantity
by $\mathrm{leaves}(n)$.
See Figure~\ref{fig:tree-logical_with_leave_count} for an example.

\begin{figure}[htbp]
	\centering
	\begin{tikzpicture}[level distance=10mm, nodes={draw,rectangle,rounded corners=.2cm,->}, level 1/.style={sibling distance=40mm}, level 2/.style={sibling distance=15mm}, level 3/.style={sibling distance=15mm}]
		\node[draw, circle] {5}
		child { node[draw, circle] {2} 
			child{ node {$\mathtt{0000}$} edge from parent node[draw=none, left] {\texttt{00}} }
			child{ node {$\mathtt{0010}$} edge from parent node[draw=none, left] {\texttt{10}} }
			edge from parent node[draw=none, left] {\texttt{00}}
		}
		child { node[draw, circle] {3}
			child { node[draw, circle] {2} 
				child{ node {$\mathtt{1001}$} edge from parent node[draw=none, left] {\texttt{01}} }
				child{ node {$\mathtt{1011}$} edge from parent node[draw=none, left] {\texttt{11}} }
				edge from parent node[draw=none, left] {\texttt{0}}
			}
			child { node {$\mathtt{1111}$} edge from parent node[draw=none, left] {\texttt{111}}}
			edge from parent node[draw=none, left] {\texttt{1}}
		};
	\end{tikzpicture}
	\caption{Example of tree where each node stores the number of leaves in the subtree. We omit this quantity (which is 1) on the leaves themselves for readability.\label{fig:tree-logical_with_leave_count}}
\end{figure}
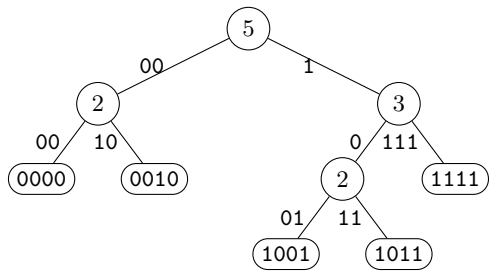

Assume that $S$ has $R$ elements in $\set{0,1}^n$, let $N=2^n$.
Assume here that we have some easily computable order on $\set{0,1}^n$,
represented by a map $\phi:\set{0,1}^n\to N$ that assigns to each bit-string
its index, and its inverse $\phi^{-1}$ also easily computable.
Given $i\in[N-R]$, the goal is to find the $i^{th}$ element in $\set{0,1}^n\setminus S$.

\newpage

\begin{algorithm}[H]
	\DontPrintSemicolon
	\KwIn{index $i$, radix tree $T$, node \texttt{node}}
	\KwOut{$i^{th}$ element in 
		$\set{x\in\set{0,1}^n:\phi(x)\geqslant\phi(\ell)}\setminus \Subtree{T,\texttt{node}}$
		where $\ell$ is the bit-string of the left-most leaf in the
		subtree rooted at \texttt{node}
	}
	\eIf{\texttt{node} is a leaf}{
		Let $x$ be the bit-string of \texttt{node} \;
		\Return{$\phi^{-1}(\phi(x)+i)$}
	}{
		\tcc{Here \texttt{node} must have two children \Left{\texttt{node}}
			and \Right{\texttt{node}}}
		Let $x$ be the bit-string of the left-most leaf in subtree
		rooted at \texttt{node} \;
		Let $y$ be the bit-string of the left-most leaf in subtree
		rooted at \Right{\texttt{node}} \;
		\tcc{compute the number of elements in $[x,y)\setminus T$}
		$\delta\gets \phi(y)-\phi(x)-\Leaves{\Left{\texttt{node}}}$ \;
		\eIf{$i\geqslant \delta$}{
			\Return{\FindNthNotInSubtree{$i-\delta$,T,\Right{\texttt{node}}}}
		}{
			\Return{\FindNthNotInSubtree{$i$,T,\Left{\texttt{node}}}}
		}
	}
	\caption{FindNthNotInSubtree($i$, $T$, \texttt{node})}
\end{algorithm}

\begin{algorithm}[H]
	\DontPrintSemicolon
	\KwIn{index $i$, radix tree $T$}
	\KwOut{$i^{th}$ element in $\set{0,1}^n\setminus T$}
	Compute the bit-string $x$ of the leftmost leaf of the tree $T$ \;
	\eIf{$i<\phi(x)$}{
		\Return{$\phi^{-1}(i)$}
	}{
		\Return{\FindNthNotInSubtree{$i-\phi(x)$, $T$, \Root{$T$}}}
	}
	\caption{FindNthNotInTree($i$, $T$)}
\end{algorithm}

\begin{theorem}
	$\FindNthNotInTree(i,T)$ returns the $i^{th}$ element in $\set{0,1}^n\setminus T$ in $\poly(n)$ time. 
\end{theorem}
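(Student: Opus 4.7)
The plan is to prove correctness by structural induction on the subtree passed to $\texttt{FindNthNotInSubtree}$, then handle the wrapper $\texttt{FindNthNotInTree}$ by a small case split, and finally bound the running time by the depth of the radix tree.

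For correctness of $\texttt{FindNthNotInSubtree}(i,T,\texttt{node})$, I induct on the height of the subtree rooted at $\texttt{node}$. Write $\ell$ for the bit-string of the leftmost leaf of that subtree; the invariant to prove is exactly the specification in the pseudocode, namely that the call returns the $i$-th element in $\phi$-order of $\{y : \phi(y)\geq \phi(\ell)\}\setminus \texttt{subtree}(T,\texttt{node})$. The base case, when $\texttt{node}$ is itself that leaf $x$, reduces to checking that $\phi^{-1}(\phi(x)+i)$ is the $i$-th element of the set $\{y : \phi(y)\geq \phi(x)\}\setminus\{x\}$ in the indexing convention fixed by the pseudocode. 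For the inductive step, let $\ell_R$ be the leftmost leaf of the right subtree. The key observation is that every leaf of the left subtree lies in the interval $[\phi(\ell),\phi(\ell_R))$, so the set of missing elements starting at $\phi(\ell)$ splits, in $\phi$-order, into two contiguous blocks: first, the elements of $[\phi(\ell),\phi(\ell_R))$ absent from the left subtree, which number exactly $\phi(\ell_R)-\phi(\ell)-\texttt{leaves}(\texttt{left}(\texttt{node}))=\delta$, and second, the elements of index at least $\phi(\ell_R)$ absent from the right subtree. The test $i\geq\delta$ routes to the appropriate child with the properly shifted index, and the induction hypothesis applied to that recursive call closes the step.

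For the wrapper $\texttt{FindNthNotInTree}(i,T)$, let $x$ be the leftmost leaf of the whole tree. All $\phi(x)$ bit-strings of index less than $\phi(x)$ are automatically missing from $T$, so either $i<\phi(x)$ and the answer is $\phi^{-1}(i)$ directly, or $i\geq\phi(x)$ and the result follows from the inductive claim above applied at the root with the index shifted by $\phi(x)$. For the time complexity, each recursive call strictly descends one level of the tree, and the depth of any radix tree on $\zo^n$ is at most $n$ because the edge labels along any root-to-leaf path concatenate to a string of length $n$ with every label nonempty. At each visited node the algorithm performs a constant number of invariant lookups, applications of $\phi$ or $\phi^{-1}$, and arithmetic operations on $n$-bit integers, each costing $\poly(n)$; the total is therefore $\poly(n)$.

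The main obstacle is arranging for the leftmost-leaf bit-string of every visited subtree to be available in $\poly(n)$ time, since the quantum radix tree as described in \Cref{sec:data} only stipulates the leaf-count invariant. I would resolve this by augmenting each node with its leftmost-leaf bit-string as an additional invariant; this increases the node size by $n$ bits but preserves history-independence because the invariant depends only on the logical structure of the subtree, so all arguments about $\ket{T(S)}$ carry over unchanged. Alternatively, one can recompute the leftmost leaf on the fly by descending leftward from the current node, which multiplies the per-level cost by at most a factor $n$ and still yields a $\poly(n)$ overall bound. Either adaptation is compatible with the $\INSERT$ and swap-update operations defined in \Cref{sec:data}, so the augmented data structure remains usable inside the chained quantum walk.
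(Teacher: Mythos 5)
The paper states this theorem without any accompanying proof, so there is nothing to compare against; your structural induction on the subtree --- splitting the missing elements at or above $\phi(\ell)$ into the block strictly below $\phi(\ell_R)$, of size exactly $\delta=\phi(\ell_R)-\phi(\ell)-\mathrm{leaves}(\mathrm{left}(\texttt{node}))$, followed by the block at or above $\phi(\ell_R)$ --- is precisely the argument the pseudocode is built around, and it is correct (modulo an off-by-one between the $0$-indexed wrapper and the leaf base case that is already present in the paper's pseudocode, not introduced by you). Your closing observation, that the leftmost-leaf bit-string of each visited subtree is not provided by the leaf-count invariant alone and must be obtained either as an additional logical invariant or by an explicit leftward descent accumulating edge labels, identifies and repairs a genuine omission in the paper's description of the data structure.
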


We now consider the same problem where we have two trees $T$ and $T'$ and
we want to find the $i^{th}$ element in $\set{0,1}^n$ which is not in $T$
and not in $T'$. We assume that $T$ and $T'$ have \textbf{disjoint} leaves. This problem appears in our chained quantum walk in \Cref{sec:many_collisions}.

\begin{algorithm}[h]
	\DontPrintSemicolon
	\KwIn{bit-strings $u$ and $v$, radix tree $T$, a node \texttt{node} of $T$}
	\KwOut{size of $[u,v)\setminus \set{\text{all the leaves in the subtree root at \texttt{node}}}$}
	Let $x$ be the bit-string of the left-most leaf in subtree
	rooted at \texttt{node} \;
	Let $y$ be the bit-string of the right-most leaf in subtree
	rooted at \texttt{node} \;
	\tcc{when the interval $[u,v)$ entirely covers the subtree}
	\If{$\phi(u)\leqslant \phi(x)$ and $\phi(y)<\phi(v)$}{
		\Return{$\phi(v)-\phi(u)-\Leaves{\texttt{node}}$}
	}
	\tcc{when the interval $[u,v)$ is disjoint from the subtree}
	\If{$\phi(v)\leqslant\phi(x)$ or $\phi(y)<\phi(u)$}{
		\Return{$\phi(v)-\phi(u)$}
	}
	\tcc{if we are here, \texttt{node} cannot be a leaf}
	Let $z$ be the bit-string of the left-most leaf in subtree
	rooted at \Right{\texttt{node}} \;
	\uIf{$\phi(v)\leqslant\phi(z)$}{
		\tcc{when  $[u,v)$ only intersects the left subtree}
		\Return{\CountInIntervalNotTree{$u$, $v$, $T$, \Left{\texttt{node}}}}
	}
	\uElseIf{$\phi(z)\leqslant\phi(u)$}{
		\tcc{when  $[u,v)$ only intersects the right subtree}
		\Return{\CountInIntervalNotTree{$u$, $v$, $T$, \Right{\texttt{node}}}}
	}
	\Else{
		\Return{\CountInIntervalNotTree{$u$, $z$, $T$, \Left{\texttt{node}}}
			+\CountInIntervalNotTree{$z$, $v$, $T$, \Right{\texttt{node}}}}
	}
	
	\caption{CountInIntervalNotSubtree($u$, $v$, $T$, \texttt{node})}
\end{algorithm}

\begin{algorithm}[h]
	\DontPrintSemicolon
	\KwIn{bit-strings $u$ and $v$, radix tree $T$}
	\KwOut{size of $[u,v)\setminus T$}
	\Return{\CountInIntervalNotSubtree{$u, v, T, \Root{T}$}}
	\caption{CountInIntervalNotTree($u$, $v$, $T$)}
\end{algorithm}

\begin{algorithm}[h]
	\DontPrintSemicolon
	\KwIn{index $i$, radix trees $T$ and $T'$ with disjoint leaves, node \texttt{node} of $T$}
	\KwOut{$i^{th}$ element in 
		$\set{x\in\set{0,1}^n:\phi(x)\geqslant\phi(\ell)}\setminus (\Subtree{T,\texttt{node}}\cup T')$
		where $\ell$ is the bit-string of the left-most leaf in the
		subtree of $T$ rooted at \texttt{node}
	}
	\eIf{\texttt{node} is a leaf}{
		Let $x$ be the bit-string of \texttt{node} \;
		$\delta\gets \CountInIntervalNotTree{$\texttt{0}^n$, $x$, $T'$}$ \;
		\Return{\FindNthNotInSubtree{$i+\delta+1$, $T'$}}
	}{
		\tcc{Here \texttt{node} must have two children \Left{\texttt{node}}
			and \Right{\texttt{node}}}
		Let $x$ be the bit-string of the left-most leaf in subtree
		rooted at \texttt{node} \;
		Let $y$ be the bit-string of the left-most leaf in subtree
		rooted at \Right{\texttt{node}} \;
		\tcc{compute the number of elements in $[x,y)\setminus (T\cup T')$}
		$\delta\gets \CountInIntervalNotTree{x,y,T'}-\Leaves{\Left{\texttt{node}}}$ \;
		\eIf{$i\geqslant \delta$}{
			\Return{\FindNthNotInTwoSubtrees{$i-\delta$,T,T',\Right{\texttt{node}}}}
		}{
			\Return{\FindNthNotInTwoSubtrees{$i$,T,T',\Left{\texttt{node}}}}
		}
	}
	\caption{FindNthNotInTwoSubtrees($i$, $T$, $T'$, \texttt{node})}
\end{algorithm}

\begin{algorithm}[h]
	\DontPrintSemicolon
	\KwIn{index $i$, radix trees $T$ and $T'$ with disjoint leaves}
	\KwOut{$i^{th}$ element in $\set{0,1}^n\setminus (T\cup T')$}
	Compute the bit-string $x$ of the leftmost leaf of the tree $T$ \;
	\tcc{compute the number of elements on the left of $T$ that are not in $T'$}
	$\delta\gets \CountInIntervalNotTree{$\texttt{0}^n,x,T'$}$ \;
	\eIf{$i<\delta$}{
		\Return{\FindNthNotInTree{$i, T'$}}
	}{
		\Return{\FindNthNotInTwoSubtrees{$i-\delta$, $T$, $T'$, \Root{$T$}}}
	}
	\caption{FindNthNotInTwoTrees($i$, $T$, $T'$)}
\end{algorithm}

\begin{theorem}
	$\FindNthNotInTwoTrees($i$, $T$, $T'$)$ returns the $i^{th}$ element in $\set{0,1}^n\setminus (T\cup T')$ in $\poly(n)$ time. 
\end{theorem}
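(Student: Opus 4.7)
The plan is to establish correctness and the polynomial runtime by proving each auxiliary subroutine in turn and then combining them. First I would verify that $\CountInIntervalNotSubtree(u,v,T,\texttt{node})$ returns $|[u,v)\setminus \Subtree(T,\texttt{node})|$ (where $[u,v)$ denotes the strings with index in $[\phi(u),\phi(v))$) by structural induction on the subtree. The two early-return branches are immediate from the stored $\Leaves(\cdot)$ invariant: if $[u,v)$ contains the whole subtree then the count is $\phi(v)-\phi(u)-\Leaves(\texttt{node})$, and if they are disjoint it is $\phi(v)-\phi(u)$. Otherwise, splitting the interval at the leftmost leaf $z$ of the right child uses additivity $|[u,v)\setminus S|=|[u,z)\setminus S_L|+|[z,v)\setminus S_R|$, which matches the three recursive branches. $\CountInIntervalNotTree$ then follows by calling this on the root.

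Next I would prove, by structural induction on \texttt{node}, that $\FindNthNotInTwoSubtrees(i,T,T',\texttt{node})$ returns the $i^{\mathrm{th}}$ element (0-indexed) of
\[
\{x\in\zo^n:\phi(x)\geqslant\phi(\ell)\}\setminus(\Subtree(T,\texttt{node})\cup T'),
\]
where $\ell$ is the leftmost leaf of the subtree rooted at \texttt{node}. At an internal node, the value $\delta$ counts how many admissible elements (not in $T'$, not in the left subtree of $T$) lie in $[x,y)$; if $i\geqslant\delta$ the answer lies past $y$, so we recurse on the right subtree with index $i-\delta$ and invoke the induction hypothesis (whose leftmost leaf precondition is satisfied by construction); otherwise we recurse on the left with index $i$. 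At a leaf with bit-string $x$, the subtree contains only $x$; since $T$ and $T'$ have disjoint leaves, $x\notin T'$, so the admissible set reduces to $\{y:\phi(y)>\phi(x)\}\setminus T'$, and shifting past the first $\phi(x)-\delta$ elements of $T'$-admissible inputs yields the index $i+\delta+1$ used to call into the $\FindNthNotInTree$ family on $T'$. Finally, $\FindNthNotInTwoTrees$ itself separates the prefix $[0^n,x)$ (free of $T$-leaves) from the rest, each handled by the correct subroutine; correctness follows by combining the two specifications.

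For the runtime, each recursive call in $\FindNthNotInTwoSubtrees$ descends one level of the radix tree, so the depth of the recursion is $O(n)$. Each call does constant arithmetic plus one $\CountInIntervalNotTree$ query. In $\CountInIntervalNotSubtree$ itself, the three-way recursion is a tree walk: once the interval stops straddling the midpoint $z$, the computation restricts to a single downward path, so the total work per query is $O(n^2)$. Summing across the $O(n)$ levels of $\FindNthNotInTwoSubtrees$ gives a polynomial bound, and all bit-string operations are handled in $\poly(n)$ by the assumed bit-string data structure.

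The main obstacle I expect is bookkeeping the index arithmetic precisely, especially in the leaf base case of $\FindNthNotInTwoSubtrees$: one must justify the offset $i+\delta+1$ by keeping track of (i) the fact that the current leaf $x$ is itself excluded from the admissible set, (ii) that $x\notin T'$ because the two trees have disjoint leaves, and (iii) that $\delta=\CountInIntervalNotTree(0^n,x,T')$ counts exactly the admissible elements strictly before $x$. A second subtlety is ensuring that the recursive specification's precondition ``the leftmost leaf of the subtree'' stays consistent as we descend into left versus right children, but this follows directly from the inductive definition of the leftmost leaf.
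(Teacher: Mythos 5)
Your proof is correct, and it supplies exactly the argument the paper leaves implicit: the theorem is stated in the appendix with no proof at all, the pseudocode itself being the intended justification, and your structural inductions on $\CountInIntervalNotSubtree$ and $\FindNthNotInTwoSubtrees$ together with the index bookkeeping at the leaf case (including the observation that the leaf-case call to \texttt{FindNthNotInSubtree} with two arguments must be read as a call into $\FindNthNotInTree$ on $T'$) are precisely what a full write-up would contain. Nothing to add.
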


\end{document}